\newtheorem{theorem}{Theorem}%[section]
\newtheorem{lemma}[theorem]{Lemma}
\newtheorem{proposition}[theorem]{Proposition}
\theoremstyle{definition}
\title{Kesten--McKay law for random subensembles\\of Paley equiangular tight frames}
\author{Mark~Magsino \and Dustin~G.~Mixon \and Hans Parshall}
\date{}
\begin{document}
\maketitle

\begin{abstract}
We apply the method of moments to prove a recent conjecture of Haikin, Zamir and Gavish~\cite{HaikinZG:17} concerning the distribution of the singular values of random subensembles of Paley equiangular tight frames.
Our analysis applies more generally to real equiangular tight frames of redundancy $2$, and we suspect similar ideas will eventually produce more general results for arbitrary choices of redundancy.
\end{abstract}

\section{Introduction}

Frame theory concerns redundant representation in a Hilbert space.
A \textbf{frame}~\cite{DuffinS:52} is a sequence $\{\varphi_i\}_{i\in I}$ in a Hilbert space $H$ for which there exist $\alpha,\beta\in(0,\infty)$ such that
\[
\alpha\|x\|^2
\leq\sum_{i\in I}|\langle x,\varphi_i\rangle|^2
\leq\beta\|x\|^2
\]
for every $x\in H$.
If every $\varphi_i$ has unit norm, then we say the frame is \textbf{unit norm}, and if $\alpha=\beta$, we say the frame is \textbf{tight}~\cite{DaubechiesGM:86}.
In the special case where $H=\mathbb{R}^d$, a frame is simply a spanning set, but unit norm tight frames are still interesting and useful~\cite{BenedettoF:03,Mixon:16}.
For example, \textbf{equiangular tight frames} are unit norm tight frames with the additional property that $|\langle \varphi_i,\varphi_j\rangle|$ is constant over the choice of pair $\{i,j\}$.
Equiangular tight frames are important because they necessarily span optimally packed lines, which in turn find applications in multiple description coding~\cite{StrohmerH:03}, digital fingerprinting~\cite{MixonQKF:13}, compressed sensing~\cite{BandeiraFMW:13}, and quantum state tomography~\cite{RenesBSC:04}; see~\cite{FickusM:15} for a survey.

Various applications demand control over the singular values of subensembles of frames.
In quantum physics, Weaver's conjecture~\cite{Weaver:04} (equivalent to the Kadison--Singer problem~\cite{KadisonS:59,CasazzaFTW:06}, and recently resolved in~\cite{MarcusSS:15}) concerns the existence of subensembles of unit norm tight frames with appropriately small spectral norm.
Compressed sensing~\cite{CandesRT:06,Donoho:06} has spurred the pursuit of explicit frames with the property that every subensemble is well conditioned~\cite{DeVore:07,BourgainDFKK:11,BandeiraFMW:13}.
Motivated by applications in erasure-robust analog coding, Haikin, Zamir and Gavish~\cite{HaikinZG:17,HaikinZG:18} recently launched a new line of inquiry:\ identify frames for which the singular values of random subensembles exhibit a predictable distribution.
(One might consider this to be a more detailed analogue to Tropp's estimates on the conditioning of random subensembles~\cite{Tropp:08}.)
Of particular interest are random subensembles of equiangular tight frames, and in this paper, we consider equiangular tight frames comprised of $2d$ vectors in $\mathbb{R}^d$, which correspond to symmetric conference matrices.
(Note that such frames have already received some attention in the context of compressed sensing~\cite{BandeiraFMW:13,BandeiraMM:17}.)

An $n\times n$ matrix $S$ is said to be a \textbf{conference matrix} if
\begin{itemize}
\item[(i)] $S_{ii}=0$ for every $i\in[n]$,
\item[(ii)] $S_{ij}\in\{\pm1\}$ for every $i,j\in[n]$ with $i\neq j$, and
\item[(iii)] $S^\top S=(n-1)I$.
\end{itemize}
A symmetric conference matrix of order $n$ exists whenever $n-1\equiv 1\bmod 4$ is a prime power (by a Paley--based construction), and only if $n\equiv 2\bmod 4$ and $n-1$ is a sum of two squares~\cite{IoninK:07}.
Explicitly, the Paley conference matrices are obtained by building a circulant matrix from the Legendre symbol and then padding with ones, for example:
\[
\{(\tfrac{x}{5})\}_{x=0}^4=(0,+,-,-,+)
\qquad
\Longrightarrow
\qquad
S=\left[\begin{array}{c|ccccc}0&+&+&+&+&+\\\hline+&\cellcolor{black!10}0&\cellcolor{black!10}+&\cellcolor{black!10}-&\cellcolor{black!10}-&\cellcolor{black!10}+\\+&+&0&+&-&-\\+&-&+&0&+&-\\+&-&-&+&0&+\\+&+&-&-&+&0\end{array}\right]
\]
where ``$\pm$'' denotes $\pm1$.
One may verify that the above example satisfies $S^2=5I$.
For every $n\times n$ symmetric conference matrix $S$, it holds that $I+\frac{1}{\sqrt{n-1}}S$ is the Gram matrix of an equiangular tight frame consisting of $n$ vectors in $\mathbb{R}^{n/2}$~\cite{StrohmerH:03}.
In particular, the equiangular tight frames that arise from the Paley conference matrices are known as Paley equiangular tight frames.
In what follows, we consider random principal submatrices of symmetric conference matrices with the understanding that they may be identified with the Gram matrix of a random subensemble of the corresponding equiangular tight frame.

Given an $n\times n$ symmetric matrix $Z$ with eigenvalues $\lambda_1\leq\cdots\leq\lambda_n$, we let $\mu_Z$ denote the uniform probability measure over the spectrum of $Z$ (counted with multiplicity):
\[
\mu_Z:=\frac{1}{n}\sum_{i=1}^n\delta_{\lambda_i}.
\]
This is known as the \textbf{empirical spectral distribution} of $Z$.
If $Z$ is a random matrix, then its empirical spectral distribution $\mu_Z$ is a random measure.
We say a sequence $\{\zeta_i\}_{i=1}^\infty$ of random measures \textbf{converges almost surely} to a non-random absolutely continuous measure $\mu$ if for every $a,b\in\mathbb{R}$ with $a<b$, it holds that the random variable $\zeta_i(a,b)$ converges to $\mu(a,b)$ almost surely.

We are interested in random matrices of a particular form.
Let $\mathcal{I}$ denote a random subset of $[n]$ such that the events $\{1\in\mathcal{I}\},\ldots,\{n\in \mathcal{I}\}$ are independent with probability $p$.
Then for any fixed $n\times n$ matrix $A$, we write $X\sim\operatorname{Sub}(A,p)$ to denote the (random) principal submatrix of $A$ with rows and columns indexed by $\mathcal{I}$.
Following~\cite{DubbsE:15}, we define the \textbf{Kesten--McKay distribution} with parameter $v\geq 2$ by
\[
d\mu_{\operatorname{KM}(v)}=\left\{\begin{array}{cl}\frac{v\sqrt{4(v-1)-x^2}}{2\pi(v^2-x^2)}&\text{if }x^2\leq 4(v-1)\\0&\text{otherwise}\end{array}\right\}dx.
\]
Recall that a \textbf{lacunary sequence} is a set $\{n_i:i\in\mathbb{N}\}$ of natural numbers for which there exists $\lambda>1$ such that $n_{i+1}\geq \lambda n_i$ for every $i$.
We are now ready to state our main result, which corresponds to one of many conjectures posed in~\cite{HaikinZG:17}; see Figure~\ref{figure} for an illustration.

\begin{theorem}
\label{thm.main result}
Fix $p\in(0,\frac{1}{2})$, take any lacunary sequence $L$ for which there exists a sequence $\{S_n\}_{n\in L}$ of symmetric conference matrices of increasing size $n$, and consider the corresponding random matrices $X_n\sim\operatorname{Sub}(S_n,p)$.
Then the empirical spectral distribution of $\frac{1}{p\sqrt{n}}X_n$ converges almost surely to the Kesten--McKay distribution with parameter $v=1/p$.
\end{theorem}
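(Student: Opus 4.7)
The approach is the method of moments. I would embed the random submatrix into $n\times n$: let $D_\xi=\operatorname{diag}(\xi_1,\ldots,\xi_n)$ with $\xi_i$ i.i.d.\ Bernoulli$(p)$, and set $\tilde Y_n:=(p\sqrt n)^{-1}D_\xi S_n D_\xi$. Then
\[
\mu_{\tilde Y_n}=\tfrac{m}{n}\,\mu_{X_n/(p\sqrt n)}+\bigl(1-\tfrac{m}{n}\bigr)\delta_0,
\]
where $m=|\mathcal I|$. Since $m/n\to p$ almost surely by the strong law, it suffices to prove $\mu_{\tilde Y_n}$ converges almost surely to $p\,\mu_{\operatorname{KM}(1/p)}+(1-p)\delta_0$; the atoms at $0$ cancel in any bounded interval. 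Because $S_n^2=(n-1)I$, the nonzero eigenvalues of $D_\xi S_n D_\xi$ are bounded by $\sqrt{n-1}$, so the target measure is compactly supported and hence determined by its moments, and convergence of every moment suffices.

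Using $\xi_i^r=\xi_i$, the $k$-th expected moment expands as
\[
\E\tr(\tilde Y_n^k)=(p\sqrt n)^{-k}\sum_{(i_1,\ldots,i_k)\in[n]^k}p^{|\{i_1,\ldots,i_k\}|}\,S_{i_1i_2}S_{i_2i_3}\cdots S_{i_ki_1}.
\]
I would group by the set partition $\pi$ of $[k]$ recording which positions coincide, and for each $\pi$ with $|\pi|=v$ evaluate the inner sum over injective labelings using only the identities $S_{ii}=0$, $S_{ij}^2=1$ for $i\neq j$, and $S_n^2=(n-1)I$ (the latter giving $\tr(S_n^{2\ell})=n(n-1)^\ell$ and $\tr(S_n^{2\ell+1})=0$), together with inclusion-exclusion. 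After dividing by $n(p\sqrt n)^k$, the contribution of $\pi$ has order $n^{v-k/2-1}p^{v-k}$ times an integer coefficient, and summing over $\pi$ should yield exactly $p\cdot M_k^{\operatorname{KM}(1/p)}$, where $M_k^{\operatorname{KM}(v)}$ is the $k$-th Kesten--McKay moment, equivalently the number of closed walks of length $k$ from a fixed root on the infinite $v$-regular tree.

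For almost-sure convergence, I would bound the variance of each moment. Squaring the walk expansion and using $\operatorname{Cov}(\xi_I,\xi_J)=p^{|I\cup J|}-p^{|I|+|J|}$, which vanishes when $I\cap J=\emptyset$, reduces the variance to a sum over pairs of walks sharing at least one vertex; the same partition bookkeeping as above should give $\operatorname{Var}(n^{-1}\tr(\tilde Y_n^k))=O(1/n)$ for each fixed $k$. For the lacunary sequence $L=\{n_i\}$ with $n_{i+1}\geq\lambda n_i$, one has $\sum_{n\in L}1/n<\infty$, so Chebyshev and Borel--Cantelli upgrade the expectation result to almost-sure convergence of every moment along $L$, and hence almost-sure weak convergence of the empirical spectral distributions.

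The principal obstacle is the combinatorial analysis of step~2. Tree-like and all-distinct partitions contribute to the normalized moment at the same leading order, so their signed contributions must be tracked exactly. Already at $k=4$ one finds that the two ``tree'' partitions $\{13,2,4\}$ and $\{24,1,3\}$ contribute $2/p$ while the all-distinct partition contributes $-1$ (via $\tr(S_n^4)=n(n-1)^2$ and inclusion-exclusion), correctly summing to $2/p-1=p\cdot M_4^{\operatorname{KM}(1/p)}$. Generalizing this matching to arbitrary $k$ — by a combinatorial identification of the surviving partitions with closed walks on the $(1/p)$-regular tree — is where the real work lies; everything else (the variance estimate and the lacunary Borel--Cantelli argument) is standard once this identification is in hand.
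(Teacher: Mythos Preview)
Your framework is the paper's framework: expand $\E\tr((PS)^k)$ over closed walks, group by the set partition $\pi$ recording index coincidences, and show that the coefficient of each $p^t$ converges to the corresponding Kesten--McKay coefficient. The conference-matrix identities you list ($S_{ii}=0$, $S_{ij}^2=1$, $S^2=(n-1)I$) are exactly what the paper uses to evaluate each partition's contribution, and your $k=4$ computation is on target.

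Two points of departure are worth flagging. First, for the combinatorial core the paper does not attempt a bijection with tree walks (which is only heuristic when $1/p\notin\mathbb{Z}$); instead it proves directly, by induction on the number of blocks, that the only partitions $\pi$ with nonzero limit are the non-crossing ones whose associated multigraph $G_\pi$ (vertices the blocks, edges $\pi(j)\leftrightarrow\pi(j{+}1)$) decomposes into simple even cycles, and that such a $\pi$ contributes $(-1)^{k/2-m}\prod_i C_{s_i-1}$ where $2s_1,\ldots,2s_m$ are the cycle lengths. Summing these over $\pi$ with $t$ blocks then matches the Borel-triangle entry $B(k/2-1,t-k/2-1)$ via a bijection with marked Dyck paths, which yields the KM moment as a polynomial in $1/p$ valid for all real $p$.

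Second, the paper does \emph{not} bound the variance by redoing the partition bookkeeping for pairs of walks. Your claim that this gives $O(1/n)$ is plausible, but it is less routine than you suggest: the naive term count for two closed $k$-walks sharing a single vertex is $n^{2k-1}$, so one must again exploit the cancellation coming from $S^2=(n-1)I$ inside a more complicated double sum, essentially reproving the key combinatorial lemma in a harder setting. The paper sidesteps this entirely by writing $\tr((\tfrac{1}{\sqrt n}PSP+P)^k)=\|FP\|_{S^{2k}}^{2k}$ with $F^\top F=I+\tfrac{1}{\sqrt n}S$, observing that $x\mapsto\|F\operatorname{diag}(x)\|_{S^{2k}}^{2k}$ is convex and $O(n^{1-1/2k})$-Lipschitz on the relevant cube, and applying Talagrand concentration to get $\operatorname{Var}\lesssim_{p,k} n^{-1/k}$. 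That bound is weaker than your hoped-for $O(1/n)$ but still summable over a lacunary sequence, and it requires no second round of combinatorics.
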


\begin{figure}
\begin{center}
\includegraphics[width=\textwidth]{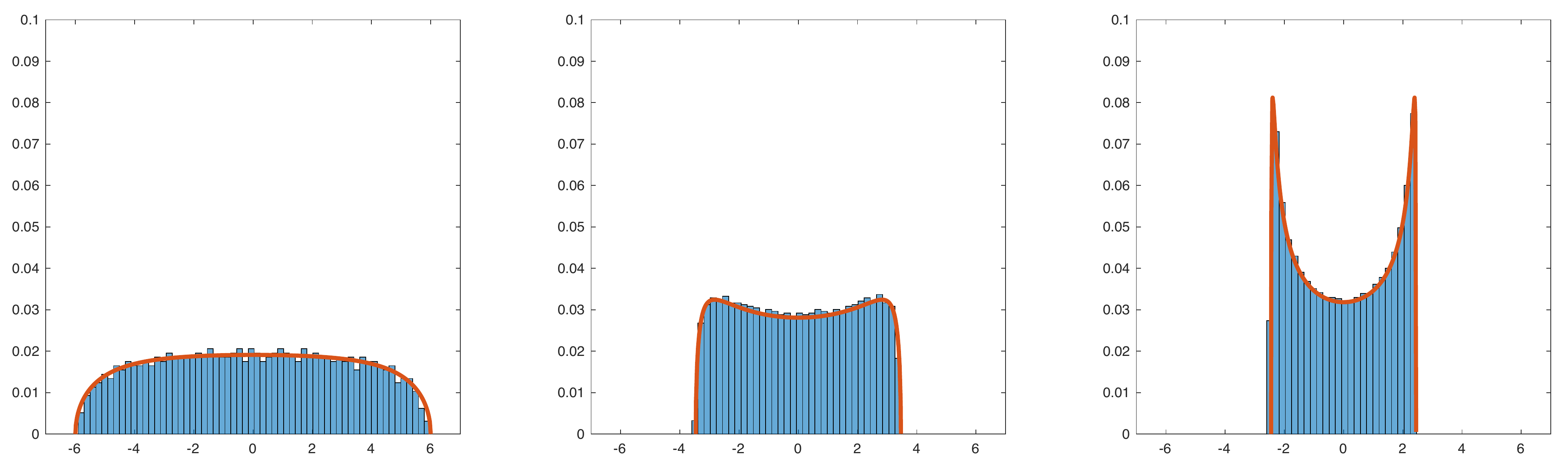}
\end{center}
\caption{\label{figure}
Consider the Paley conference matrix $S$ of order $n=10,010$.
For each choice of $p\in\{0.1,0.25,0.4\}$, we draw $X\sim\operatorname{Sub}(S,p)$ and plot a histogram of the spectrum of $\frac{1}{p\sqrt{n}}X$ along with a suitably scaled version of the Kesten--McKay density for $v=1/p$.
The similarity between these distributions was first observed by Haikin, Zamir and Gavish~\cite{HaikinZG:17}.
Our main result (Theorem~\ref{thm.main result}) explains this phenomenon. 
}
\end{figure}

In the next section, we prove this theorem using the method of moments, saving the more technical portions for Section~3.

\subsection{Notation}

Given $x\in\mathbb{R}^n$, let $\operatorname{diag}(x)$ denote the $n\times n$ diagonal matrix whose diagonal entries are the entries of $x$.
Given $Z\in\mathbb{R}^{m\times n}$, let $\|Z\|_{2\to2}$ denote the induced $2$-norm of $Z$ (i.e., the largest singular value of $Z$), and let $\|Z\|_{S^p}$ denote the Schatten $p$-norm of $Z$ (i.e., the $p$-norm of the singular values of $Z$).
Throughout this paper, we will investigate how quantities relate as $n\to\infty$.
For example, suppose we are interested in a quantity $f(n,\theta)\geq0$ that depends on both $n\in\mathbb{N}$ and some additional parameters $\theta\in\mathbb{R}^m$.
Then we write $f(n,\theta)=o(g(n,\theta))$ if for every $\theta\in\mathbb{R}^m$, it holds that $f(n,\theta)/g(n,\theta)\to 0$ as $n\to\infty$.
We write $f(n,\theta)\lesssim g(n,\theta)$ if there exists $c>0$ such that $f(n,\theta)\leq c\cdot g(n,\theta)$ for all $n\in\mathbb{N}$ and $\theta\in\mathbb{R}^m$, and we write $f(n,\theta)\lesssim_\theta g(n,\theta)$ if for every $\theta\in\mathbb{R}^m$, there exists $c(\theta)>0$ such that $f(n,\theta)\leq c(\theta)\cdot g(n,\theta)$ for all $n\in\mathbb{N}$.
Finally, we write $f(n,\theta)\asymp g(n,\theta)$ if both $f(n,\theta)\lesssim g(n,\theta)$ and $g(n,\theta)\lesssim f(n,\theta)$.

\section{Proof of the main result}

Our proof makes use of a standard sufficient condition for the almost sure convergence of random measures, which is a consequence of the moment continuity theorem, the Borel--Cantelli lemma, and Chebyshev's inequality, cf.\ Exercise~2.4.6 in~\cite{Tao:11}:

\begin{proposition}
\label{prop.key proposition}
Let $\{\zeta_i\}_{i=1}^\infty$ be a sequence of uniformly subgaussian random probability measures, and let $\mu$ be a non-random subgaussian probability measure.
Suppose that for every $k\in\mathbb{N}$, it holds that
\begin{itemize}
\item[(i)]
$\displaystyle\mathbb{E}\int_\mathbb{R}x^k d\zeta_i(x)\to\int_\mathbb{R}x^kd\mu(x)$, and
\item[(ii)]
$\displaystyle\sum_{i=1}^\infty\operatorname{Var}\bigg(\int_\mathbb{R}x^kd\zeta_i(x)\bigg)<\infty$.
\end{itemize}
Then $\zeta_i$ converges almost surely to $\mu$.
\end{proposition}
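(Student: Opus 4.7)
My plan is to combine Chebyshev's inequality, the Borel--Cantelli lemma, and the classical method of moments exactly as the text suggests. Fix $k\in\mathbb{N}$ and abbreviate $M_k(\zeta_i):=\int_\mathbb{R} x^k\,d\zeta_i(x)$. For each $\epsilon>0$, Chebyshev gives
$$\mathbb{P}\bigl(|M_k(\zeta_i)-\mathbb{E} M_k(\zeta_i)|>\epsilon\bigr)\leq\frac{\operatorname{Var}(M_k(\zeta_i))}{\epsilon^2},$$
and summing over $i$ the right-hand side is finite by hypothesis~(ii). Borel--Cantelli then yields that, almost surely, $|M_k(\zeta_i)-\mathbb{E} M_k(\zeta_i)|\leq\epsilon$ for all sufficiently large $i$. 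Letting $\epsilon$ range over a countable sequence tending to zero and intersecting the resulting full-measure events gives $M_k(\zeta_i)-\mathbb{E} M_k(\zeta_i)\to 0$ almost surely, which together with~(i) yields $M_k(\zeta_i)\to\int x^k\,d\mu(x)$ almost surely.

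Next I would intersect over the countable index $k\in\mathbb{N}$ to produce a single almost-sure event $\Omega_0$ on which \emph{every} moment of $\zeta_i$ converges to the corresponding moment of $\mu$. On $\Omega_0$ I want to promote moment convergence to weak convergence $\zeta_i\Rightarrow\mu$. This is where the subgaussianity assumptions pay off: a subgaussian probability measure has an analytic moment generating function in a neighborhood of the origin and is therefore uniquely determined by its moment sequence, while the uniform subgaussianity of the $\zeta_i$ forces uniform tail bounds and hence tightness of each realized sequence. The classical method of moments (for instance, Billingsley, Theorem~30.2) then yields $\zeta_i\Rightarrow\mu$ on $\Omega_0$. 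Because $\mu$ is absolutely continuous, every interval $(a,b)$ is a continuity set of $\mu$, so the portmanteau theorem upgrades weak convergence to $\zeta_i(a,b)\to\mu(a,b)$, which is the required notion of almost sure convergence.

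The main obstacle is essentially bookkeeping: the Chebyshev--Borel--Cantelli step produces a full-measure event depending on both $k$ and $\epsilon$, and one must check that a countable intersection over $k$ and over a sequence $\epsilon\downarrow 0$ still has full measure before invoking the deterministic method-of-moments statement pathwise. A secondary but standard point is verifying that ``uniformly subgaussian'' in the sense used here delivers both moment-determinacy of the candidate limit and the tightness needed to apply the method of moments to each realization; this follows from the standard equivalence between subgaussianity and tail bounds of the form $\zeta_i([-t,t]^c)\leq 2e^{-ct^2}$ with a common constant $c>0$.
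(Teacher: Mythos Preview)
Your proposal is correct and follows exactly the approach the paper indicates: the paper does not spell out a proof but states that the proposition is ``a consequence of the moment continuity theorem, the Borel--Cantelli lemma, and Chebyshev's inequality'' (citing Exercise~2.4.6 in Tao's book), and your argument is precisely the standard elaboration of those three ingredients. The only remark is that the paper's notion of almost sure convergence asks, for each fixed $a<b$, that $\zeta_i(a,b)\to\mu(a,b)$ almost surely, so your single event $\Omega_0$ actually delivers more than is needed.
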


As we will see, verifying hypothesis~(i) in our case reduces to a combinatorics problem, whereas hypothesis~(ii) can be treated separately with the help of Talagrand concentration:

\begin{proposition}[Talagrand concentration, Theorem~2.1.13 in~\cite{Tao:11}]
\label{prop.talagrand}
There exists a universal constant $c>0$ for which the following holds:
Suppose $f\colon\mathbb{R}^n\to\mathbb{R}$ is both convex and $\sigma$-Lipschitz in $\|\cdot\|_2$, and let $X$ be a random vector in $\mathbb{R}^n$ with independent coordinates satisfying $|X_i|\leq b$ almost surely.
Then for every $t\geq 0$, it holds that
\[
\mathbb{P}\big\{|f(X)-\mathbb{E}f(X)|\geq bt\big\}
\lesssim e^{-t^2/c\sigma^2}.
\]
\end{proposition}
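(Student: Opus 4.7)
The plan is to reproduce Talagrand's celebrated induction proof, which decomposes naturally into an abstract isoperimetric inequality on product spaces followed by a convexity-based reduction. Define the Talagrand convex distance from $x \in [-b,b]^n$ to a Borel set $A$ by
\[
d_T(x, A) := \sup_{\alpha \in \mathbb{R}^n_{\geq 0},\, \|\alpha\|_2 \leq 1}\; \inf_{y \in A}\; \sum_{i=1}^n \alpha_i \mathbf{1}\{x_i \neq y_i\}.
\]
The heart of the argument is the convex distance inequality
\[
\mathbb{E}\bigl[e^{d_T(X,A)^2/C}\bigr] \leq \frac{1}{\mathbb{P}(X \in A)}
\]
for a universal $C>0$, which I would prove by induction on $n$. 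The base case $n=1$ is direct. For the inductive step, condition on $X_n$ and view $A$ as a union of fibers $A_s = \{y' : (y',s)\in A\}$; the combinatorial core is to verify the pointwise bound
\[
d_T((x', x_n), A)^2 \;\leq\; \lambda\, d_T(x', A_{x_n})^2 + (1-\lambda)\bigl(d_T(x', \textstyle\bigcup_s A_s)^2 + 1\bigr) \quad \text{for all } \lambda \in [0,1],
\]
obtained by taking a convex combination of the $\alpha$-vectors optimal for the fiber and shadow distances and observing that disagreement at coordinate $n$ contributes at most $\sqrt{1-\lambda}$ in the quadratic form. H\"older's inequality in $\lambda$ then combines the inductive bounds on fiber and shadow into the required estimate.

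Next I would reduce the concentration statement to the distance inequality using convexity. Let $M$ be a median of $f(X)$ and put $A := \{x : f(x) \leq M\}$, so $\mathbb{P}(A) \geq 1/2$. The geometric claim is
\[
f(x) - M \;\leq\; 2b\sigma \cdot d_T(x, A) \qquad \text{for every } x \in [-b,b]^n.
\]
To prove it, let $\alpha^*$ attain the outer supremum in $d_T(x, A)$; by LP duality there exists a probability measure $\nu$ on $A$ with $\nu\{y : x_i \neq y_i\} \leq \alpha_i^*\cdot d_T(x, A)$ for every $i$. Setting $\bar y := \int y\, d\nu(y) \in \operatorname{conv}(A)$, convexity of $f$ and Jensen give $f(\bar y) \leq M$, while Cauchy--Schwarz together with $|x_i - y_i| \leq 2b$ yields $\|x - \bar y\|_2 \leq 2b\, d_T(x, A)$; the Lipschitz property of $f$ closes the estimate.

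Finally, Markov's exponential inequality applied to the convex distance bound gives $\mathbb{P}(d_T(X, A) \geq s) \leq 2 e^{-s^2/C}$, which via the geometric claim becomes $\mathbb{P}(f(X) \geq M + bt) \lesssim e^{-t^2/(c\sigma^2)}$. A symmetric argument applied to the super-level set of $f$ at the median handles the lower tail, and the median is replaced by the expectation up to an absolute constant by integrating the resulting two-sided deviation bound. The main obstacle is the first step: verifying the two-parameter decomposition of $d_T^2$ and tuning $\lambda$ compatibly with H\"older's inequality so that the induction closes with a genuinely universal constant is the real analytic content of Talagrand's theorem; the convexity reduction and the passage from median to mean are essentially bookkeeping.
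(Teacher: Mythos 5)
The paper offers no proof of this proposition: it is imported verbatim (with citation) from Theorem~2.1.13 of Tao's book, and your outline is exactly the standard Talagrand argument given there --- the convex distance inequality proved by induction on $n$, the convexity/duality reduction $f(x)-M\leq 2b\sigma\, d_T(x,A)$ via the barycenter $\bar y$ of an optimal measure on the sublevel set, and the passage from median to mean. The only quibble is cosmetic: in the inductive decomposition the extra term is $(1-\lambda)^2$ rather than $1-\lambda$ (the $n$th coordinate of the convex combination of indicator vectors is $1-\lambda$, contributing its square to $\|\cdot\|_2^2$), and the sharper form is what lets the induction close with the usual constant, though either version suffices for the unspecified universal $c$ demanded here.
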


Throughout, $S_n$ denotes an $n\times n$ symmetric conference matrix, we draw $X_n\sim\operatorname{Sub}(S_n,p)$ and put $Z_n:=\frac{1}{p\sqrt{n}}X_n$.
We typically suppress the subscript $n$.  
While the size of $Z$ is random, its average size is $pn$, and so we use $\frac{1}{pn}\operatorname{tr}(Z^k)$ as a proxy for $\int_\mathbb{R}x^kd\mu_Z(x)$.
As one might expect, this is a good approximation:

\begin{lemma}
\label{lem.moment approximation}
Put $V:=\frac{1}{pn}\operatorname{tr}(Z^k)$ and $W:=\int_\mathbb{R}x^kd\mu_{Z}(x)$.
Then
\[
|\mathbb{E}V-\mathbb{E}W|
\lesssim_p\frac{1}{\sqrt{n}},
\qquad
|\operatorname{Var}(V)-\operatorname{Var}(W)|
\lesssim_p\frac{1}{\sqrt{n}}.
\]
\end{lemma}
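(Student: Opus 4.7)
The plan is to observe that $V$ and $W$ differ only through the normalization of the trace: writing $N:=|\mathcal{I}|$ for the (random) order of $Z$, we have $W=\frac{1}{N}\tr(Z^k)$ while $V=\frac{1}{pn}\tr(Z^k)$, so
\[
V-W=\tr(Z^k)\left(\frac{1}{pn}-\frac{1}{N}\right)=\frac{N-pn}{pnN}\,\tr(Z^k).
\]
The key deterministic input is a uniform spectral bound: since $Z=\frac{1}{p\sqrt{n}}X$ is a rescaled principal submatrix of the symmetric conference matrix $S_n$, for which $S_n^\top S_n=(n-1)I$ implies $\|S_n\|_{2\to2}=\sqrt{n-1}$, passing to a principal submatrix cannot increase the operator norm, so $\|Z\|_{2\to2}\leq 1/p$. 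This gives $|\tr(Z^k)|\leq N(1/p)^k$ and hence the pointwise estimate
\[
|V-W|\leq\frac{(1/p)^k\,|N-pn|}{pn}
\]
on the event $\{N>0\}$; the event $\{N=0\}$ has probability $(1-p)^n$ (with $V=0$ there and $W$ set to $0$ by convention), which contributes only exponentially small error.

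Since $N\sim\operatorname{Binomial}(n,p)$, Cauchy--Schwarz yields $\mathbb{E}|N-pn|\leq\sqrt{p(1-p)n}$, and $\mathbb{E}(N-pn)^2=p(1-p)n$. Taking expectations in the pointwise bound above gives $\mathbb{E}|V-W|\lesssim_p 1/\sqrt{n}$, from which $|\mathbb{E}V-\mathbb{E}W|\leq\mathbb{E}|V-W|\lesssim_p 1/\sqrt{n}$ is immediate. Similarly, squaring the pointwise bound first yields $\mathbb{E}(V-W)^2\lesssim_p 1/n$, which is what will be needed for the variance comparison.

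For the variance bound I would expand
\[
\operatorname{Var}(V)-\operatorname{Var}(W)=\mathbb{E}[(V-W)(V+W)]-(\mathbb{E}V-\mathbb{E}W)(\mathbb{E}V+\mathbb{E}W),
\]
and use the uniform bounds $|W|\leq(1/p)^k$ and $|V|\leq\frac{N}{pn}(1/p)^k\leq(1/p)^{k+1}$ (the latter because $N\leq n$). Since $|V+W|$ is bounded by a constant depending only on $p$ and $k$, both terms on the right are controlled by $\mathbb{E}|V-W|$ and $|\mathbb{E}V-\mathbb{E}W|$, each of which is $\lesssim_p 1/\sqrt{n}$ by the previous paragraph. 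I do not anticipate any serious obstacle: the lemma is at heart a concentration statement for the binomial variable $N$, amplified by the uniform spectral bound $\|Z\|_{2\to2}\leq 1/p$ that the conference matrix structure provides for free.
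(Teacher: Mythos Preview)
Your proposal is correct and follows essentially the same route as the paper: both proofs exploit the relation $V=\frac{N}{pn}W$ (equivalently your $V-W=\frac{N-pn}{pnN}\tr(Z^k)$), the uniform spectral bound $\|Z\|_{2\to2}\leq 1/p$ coming from $\|S_n\|_{2\to2}=\sqrt{n-1}$, and the binomial estimate $\E|N-pn|\leq\sqrt{p(1-p)n}$ to get $\E|V-W|\lesssim_p n^{-1/2}$, then compare variances via $V^2-W^2=(V-W)(V+W)$ together with the almost-sure bounds on $|V|$ and $|W|$. Your explicit handling of the event $\{N=0\}$ is a harmless refinement the paper leaves implicit.
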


\begin{proof}
Since $X$ is a submatrix of $S$, it holds that
\[
|V|
\lesssim_p\frac{1}{n}\sum_i|\lambda_i(Z)|^k
\leq\|Z\|_{2\to2}^k
\asymp_p\frac{1}{n^{k/2}}\|X\|_{2\to2}^k
\leq \frac{1}{n^{k/2}}\|S\|_{2\to2}^k
\leq 1
\]
almost surely.
Similarly, $|W|\leq\|Z\|_{2\to2}^k\lesssim_p 1$ almost surely.
Next, let $N$ denote the (random) size of $Z$.
Then $V=\frac{N}{pn}\cdot W$, and so our bound on $|W|$ gives
\[
\mathbb{E}|V-W|
=\mathbb{E}\big(|\tfrac{N}{pn}-1|\cdot|W|\big)
\lesssim_p\mathbb{E}|\tfrac{N}{pn}-1|
\leq\frac{1}{pn}\big(\mathbb{E}(N-pn)^2\big)^{1/2}
\lesssim_p\frac{1}{\sqrt{n}},
\]
where the last step applies the fact that $N$ has binomial distribution.
This immediately implies the desired bound on $|\mathbb{E}V-\mathbb{E}W|$.
Finally, since $|V|,|W|\lesssim_p 1$ almost surely, we have
\begin{align*}
|\operatorname{Var}(V)-\operatorname{Var}(W)|
&\leq|\mathbb{E}V^2-\mathbb{E}W^2|+|(\mathbb{E}V)^2-(\mathbb{E}W)^2|\\
&\leq\mathbb{E}\big(|V+W||V-W|\big)+|\mathbb{E}V+\mathbb{E}W||\mathbb{E}V-\mathbb{E}W|\\
&\lesssim_p\mathbb{E}|V-W|
\lesssim_p\frac{1}{\sqrt{n}},
\end{align*}
which completes the result.
\end{proof}

As such, to demonstrate hypothesis~(i) from Proposition~\ref{prop.key proposition} in our case, it suffices to prove
\begin{equation}
\label{eq.desired moments}
\mathbb{E}\frac{1}{pn}\operatorname{tr}(Z^k)
\to\int_\mathbb{R}x^k~d\mu_{\operatorname{KM}(1/p)}(x).
\end{equation}
The Kesten--McKay moments are implicitly computed in~\cite{McKay:81}, and are naturally expressed in terms of entries of \textbf{Catalan's triangle}:
\[
C(n,k)
:=\frac{(n+k)!(n-k+1)}{k!(n+1)!}.
\]

\begin{proposition}[Lemma~2.1 in~\cite{McKay:81}]
\label{prop.km moments}
For every $v\geq2$ and $k\in\mathbb{N}$, it holds that
\[
\int_\mathbb{R}x^k~d\mu_{\operatorname{KM}(v)}(x)
=\left\{\begin{array}{cl}\displaystyle
\sum_{j=1}^{k/2}C(k/2-1,k/2-j)v^j(v-1)^{k/2-j}&\text{if $k$ is even}\\0&\text{if $k$ is odd.}\end{array}\right.
\]
%\[
%\mathbb{E}Y^{2l+1}=0,
%\qquad
%\mathbb{E}Y^{2l}
%=\sum_{j=1}^{l}C(l-1,l-j)v^j(v-1)^{l-j}.
%\]
\end{proposition}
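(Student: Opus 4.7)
The plan is to handle the odd case by symmetry and the even case in two stages: first combinatorially for integer $v \geq 2$, then by polynomial extension to real $v \geq 2$.

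For odd $k$ the conclusion is immediate, since $\mu_{\operatorname{KM}(v)}$ is an even measure on $\mathbb{R}$.

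For even $k = 2\ell$ and integer $v \geq 2$, I will use the classical identification of $\mu_{\operatorname{KM}(v)}$ with the spectral measure at a fixed vertex $r$ of the infinite $v$-regular tree $T_v$, which reduces $m_{2\ell} := \int x^{2\ell}\,d\mu_{\operatorname{KM}(v)}(x)$ to the number of closed walks of length $2\ell$ in $T_v$ starting at $r$. Each such walk is determined by (i) its height profile relative to $r$, which is a Dyck path $\pi$ of length $2\ell$, and (ii) the labels chosen at each up-step: $v$ neighbor choices at $r$ and $v-1$ non-parent choices at every non-root vertex, while down-steps are forced. If $\pi$ has exactly $j$ first-returns to height zero then $j$ of the $\ell$ up-steps occur at $r$, so the number of walks realizing $\pi$ is $v^j(v-1)^{\ell-j}$. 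A classical Catalan-triangle identity---proved by decomposing $\pi$ into its $j$ primitive excursions of half-lengths $m_1,\ldots,m_j$ with $\sum m_i=\ell$ and summing the product $\prod_i\mathrm{Cat}_{m_i-1}$ of Catalan numbers over all such compositions---identifies the number of Dyck paths of length $2\ell$ with exactly $j$ first-returns as $C(\ell-1,\ell-j)$. Summing over $j$ produces the claimed formula.

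To extend to real $v\geq 2$, I will show that both sides are polynomials in $v$ and invoke agreement on the infinite set $\{2,3,4,\ldots\}$. The right-hand side is manifestly polynomial. For the left-hand side I will compute the Stieltjes transform
\[
G(z)=\int\frac{d\mu_{\operatorname{KM}(v)}(x)}{z-x}=\frac{(v-2)z-v\sqrt{z^2-4(v-1)}}{2(v^2-z^2)},
\]
either from the tree resolvent recursion (valid a priori for integer $v$ and then analytically continued) or by direct residue calculus after the substitution $x=2\sqrt{v-1}\cos\theta$, and then expand at $z=\infty$ using $\sqrt{1-4t}=1-2\sum_{n\geq 0}\mathrm{Cat}_n\,t^{n+1}$ together with a geometric expansion of $(v^2-z^2)^{-1}$. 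The coefficient of $z^{-2\ell-1}$ is $m_{2\ell}$, which emerges as an explicit polynomial in $v$.

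The main obstacle I anticipate is the Catalan-triangle first-return identity: it is standard, but requires a careful Dyck-path decomposition and a Lagrange-inversion step to identify $[z^{\ell-j}]\mathrm{Cat}(z)^j$ with $C(\ell-1,\ell-j)$. The analytic-extension step should be essentially mechanical once the Stieltjes transform is in hand.
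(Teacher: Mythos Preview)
The paper does not supply its own proof of this proposition; it is quoted as Lemma~2.1 of McKay~\cite{McKay:81} and used as a black box. Your proposal is correct and in fact recapitulates McKay's original argument: for integer $v\geq 2$ the measure $\mu_{\operatorname{KM}(v)}$ is the spectral measure at a vertex of the infinite $v$-regular tree, so the $2\ell$-th moment counts closed walks of length $2\ell$ from that vertex; stratifying such walks by their Dyck-path height profile and noting that an up-step from the root has $v$ choices while an up-step from any non-root vertex has $v-1$ gives the weight $v^{j}(v-1)^{\ell-j}$ for a path with $j$ returns to the root, and the number of Dyck paths of semi-length $\ell$ with exactly $j$ returns is the ballot number $C(\ell-1,\ell-j)$, obtained precisely by the primitive-excursion decomposition you outline. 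Your polynomial extension to real $v\geq 2$ is a sensible addendum not needed in McKay's setting (where only integer $v$ occurs), and it goes through as you describe: the Stieltjes transform you wrote is correct, and its expansion at $z=\infty$ exhibits each even moment as a polynomial in $v$, so agreement on $\{2,3,4,\ldots\}$ forces agreement for all $v$.
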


Recalling that $Z=\frac{1}{p\sqrt{n}}X$, then Proposition~\ref{prop.km moments} gives that \eqref{eq.desired moments} is equivalent to 
\begin{equation}
\label{eq.limiting expression}
\frac{1}{n^{k/2+1}}\mathbb{E}\operatorname{tr}(X^k)
\to\left\{\begin{array}{cl}
\displaystyle\sum_{t=k/2+1}^{k}(-1)^{t-k/2-1}\cdot B(k/2-1,t-k/2-1)\cdot p^t&\text{if $k$ is even}\\
0&\text{if $k$ is odd,}
\end{array}\right.
\end{equation}
%$\frac{1}{n^{k/2+1}}\mathbb{E}\operatorname{tr}(X_n^k)\to0$ for $k$ odd and 
%\begin{align*}
%\frac{1}{n^{k/2+1}}\mathbb{E}\operatorname{tr}(X_n^k)
%&\to 
%p^{k/2+1}\sum_{j=1}^{k/2}C(k/2-1,k/2-j)(1-p)^{k/2-j}\\
%&=p^{k/2+1}\sum_{i=0}^{k/2-1}C(k/2-1,i)\sum_{s=0}^{i}\binom{i}{s}(-p)^s\\
%&=\sum_{s=0}^{k/2-1}(-1)^s\sum_{i=s}^{k/2-1}\binom{i}{s}C(k/2-1,i)p^{s+k/2+1}\\
%&=\sum_{s=0}^{k/2-1}(-1)^sB(k/2-1,s)p^{s+k/2+1}\\
%&=
%\sum_{t=k/2+1}^{k}(-1)^{t-k/2-1}B(k/2-1,t-k/2-1)p^t.
%\end{align*}
%for $k$ even, 
where $B(n,k)$ denotes an entry of \textbf{Borel's triangle}:
\[
B(n,k)
:=\sum_{j=k}^n\binom{j}{k}C(n,j).
\]
To compute these limits, we first find a convenient expression for $\frac{1}{n^{k/2+1}}\mathbb{E}\operatorname{tr}(X^k)$.
To this end, recall that $X$ is the submatrix of $S$ with index set $\mathcal{I}$, and let $P$ denote the random $n\times n$ diagonal matrix such that $P_{ii}=1_{\{i\in\mathcal{I}\}}$.
Then 
\begin{align*}
\operatorname{tr}(X^k)
=\operatorname{tr}((PSP)^k)
=\operatorname{tr}((PS)^k)
&=\sum_{a_1,\ldots,a_k\in[n]}(PS)_{a_1a_2}(PS)_{a_2a_3}\cdots(PS)_{a_ka_1}\\
&=\sum_{a_1,\ldots,a_k\in[n]}S_{a_1a_2}S_{a_2a_3}\cdots S_{a_ka_1}\cdot\prod_{i=1}^k1_{\{a_i\in\mathcal{I}\}}.
\end{align*}
Considering $\mathbb{E}\prod_{i=1}^k1_{\{a_i\in\mathcal{I}\}}=p^{|\{a_1,\ldots,a_k\}|}$, it follows that
\begin{equation}
\label{eq.rewrite non-asymptotic expectation}
\frac{1}{n^{k/2+1}}\mathbb{E}\operatorname{tr}(X^k)
=\sum_{t=1}^k \bigg(\frac{1}{n^{k/2+1}}\sum_{\substack{a_1,\ldots,a_k\in[n]\\|\{a_1,\ldots,a_k\}|=t}}S_{a_1a_2}S_{a_2a_3}\cdots S_{a_ka_1}\bigg) \cdot p^t.
\end{equation}
It remains to show that these coefficients converge to the corresponding coefficients in \eqref{eq.limiting expression}.

First, we introduce some additional notation.
Taking inspiration from Bargmann invariants~\cite{Bargmann:64}, it is convenient to write
\[
\Delta(a_1,a_2,a_3,\ldots,a_k)
:=S_{a_1a_2}S_{a_2a_3}\cdots S_{a_ka_1}.
\]
Next, we say $\pi$ is a \textbf{partition} of $[k]$ into $t$ blocks if $\pi=\{B_1,\ldots,B_t\}$ such that $B_1\sqcup\cdots\sqcup B_t=[k]$, and we let $\Pi(k,t)$ denote the set of all such partitions.
For each partition $\pi$ of $[k]$, we consider the set of functions $a\colon[k]\to[n]$ whose level sets are the blocks of $\pi$, namely
\[
L_n(\pi)
:=\big\{a\colon[k]\to[n]:\{a^{-1}(a(i)):i\in[k]\}=\pi\big\}.
\]
With this, we define
\[
V_n(\pi)
:=\frac{1}{n^{k/2+1}}\sum_{a\in L_n(\pi)}\Delta(a(1),\ldots,a(k)).
\]
Considering \eqref{eq.rewrite non-asymptotic expectation}, it therefore holds that
\begin{equation}
\label{eq.rewrite non-asymptotic expectation again}
\frac{1}{n^{k/2+1}}\mathbb{E}\operatorname{tr}(X^k)
=\sum_{t=1}^k\bigg(\sum_{\pi\in\Pi(k,t)}V_n(\pi)\bigg)\cdot p^t.
\end{equation}
As such, to demonstrate~\eqref{eq.limiting expression}, it suffices to determine the limit of $V_n(\pi)$ for every partition $\pi$ of $[k]$.
We start with a quick calculation:

\begin{lemma}
\label{fewblocksLemma}
For every $\pi\in\Pi(k,t)$ with $t < k/2 + 1$, it holds that $V_n(\pi)\to0$.
\end{lemma}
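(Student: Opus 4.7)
The plan is to prove this lemma by a simple counting bound, since $V_n(\pi)$ is a normalized sum whose summand magnitudes are trivially controlled.

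First I would observe that every entry $S_{ij}$ of a symmetric conference matrix lies in $\{-1,0,+1\}$, so each term $\Delta(a(1),\ldots,a(k))=S_{a(1)a(2)}\cdots S_{a(k)a(1)}$ satisfies $|\Delta(a(1),\ldots,a(k))|\leq 1$ regardless of $a$. This is the only feature of $S$ used at this stage; no cancellation is exploited.

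Next I would count $|L_n(\pi)|$. A function $a\colon[k]\to[n]$ lies in $L_n(\pi)$ precisely when its level sets coincide with the $t$ blocks of $\pi$, which amounts to specifying an injection from the $t$ blocks of $\pi$ into $[n]$. Hence
\[
|L_n(\pi)| \;=\; n(n-1)\cdots(n-t+1) \;\leq\; n^t.
\]
Combining these two observations via the triangle inequality yields
\[
|V_n(\pi)| \;\leq\; \frac{1}{n^{k/2+1}}\sum_{a\in L_n(\pi)}|\Delta(a(1),\ldots,a(k))| \;\leq\; \frac{n^t}{n^{k/2+1}} \;=\; n^{\,t-k/2-1}.
\]
Since $t<k/2+1$ means $t-k/2-1<0$, the right-hand side tends to $0$ as $n\to\infty$, giving $V_n(\pi)\to 0$.

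There is really no obstacle here: the lemma is handled entirely by the crude ``number of index tuples'' vs. ``normalization'' comparison. The interesting cases, where delicate cancellation among the $\Delta$ terms must be exploited, occur only when $t\geq k/2+1$; those will be the harder content of subsequent lemmas. In particular, the present lemma has the effect of showing that only partitions with ``many'' blocks can contribute to the limit in \eqref{eq.rewrite non-asymptotic expectation again}, which is consistent with the form of \eqref{eq.limiting expression} where the sum starts at $t=k/2+1$.
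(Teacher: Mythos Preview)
Your proof is correct and follows essentially the same approach as the paper: bound each $\Delta$-term by $1$, apply the triangle inequality, and use $|L_n(\pi)|\leq n^t=o(n^{k/2+1})$. The paper compresses this into a single sentence, but the argument is identical.
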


\begin{proof}
Estimate $|V_n(\pi)|$ using the triangle inequality to obtain a sum of $|L_n(\pi)| \leq n^t=o(n^{k/2+1})$ terms, each of size at most $1$.
\end{proof}

For each $t< k/2+1$, this establishes that the coefficient of $p^t$ in \eqref{eq.rewrite non-asymptotic expectation again} approaches zero, i.e., the corresponding coefficient in \eqref{eq.limiting expression}.
Now we wish to tackle the limiting value of $V_n(\pi)$ in general.
In light of the related literature~\cite{NicaS:06}, it comes as no surprise that $V_n(\pi)$ depends on whether $\pi$ is a so-called \textit{crossing partition}.
We say a partition $\pi$ of $[k]$ is \textbf{crossing} if there exist $A,B\in\pi$ with $A\neq B$ for which there exist $a_1,a_2\in A$ and $b_1,b_2\in B$ such that $a_1<b_1<a_2<b_2$.
Otherwise, $\pi$ is said to be \textbf{non-crossing}.
Next, for each $x\in[k]$, we let $\pi(x)$ denote the unique member of $\pi$ such that $x\in\pi(x)$.
Consider the graph $G_\pi$ with vertex set $\pi$ and edges given by $\pi(x)\leftrightarrow\pi(x+1)$ for every $x\in[k]$; here, we interpret $x+1$ modulo $k$ so that $k+1=1$.
Let $\operatorname{EC}(k,t)$ denote the set of non-crossing $\pi\in\Pi(k,t)$ for which the edges of $G_\pi$ partition into simple even cycles.
Finally, let $C_n:=\frac{1}{n+1}\binom{2n}{n}$ denote the $n$th \textbf{Catalan number}.
With these notions, we can describe the limit of each $V_n(\pi)$:

\begin{lemma}[Key combinatorial lemma]\
\label{lem.key lemma}
\begin{itemize}
\item[(i)]
Suppose $\pi\in\Pi(k,t)\setminus\operatorname{EC}(k,t)$.
Then $V_n(\pi)\to0$.
\item[(ii)]
Suppose $\pi\in\operatorname{EC}(k,t)$ and the edges of $G_\pi$ partition into $m$ simple cycles of sizes $2s_1,\ldots,2s_m$.
Then $m=k-t+1$ and
\[
V_n(\pi)
\to(-1)^{k/2-m}\cdot C_{s_1-1}\cdots C_{s_m-1}.
\]
\end{itemize}
\end{lemma}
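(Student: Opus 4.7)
The plan is to view $V_n(\pi)$ as a tensor contraction on the multigraph $G_\pi$ and to exploit the defining identity $\sum_\ell S_{i\ell}S_{\ell j}=(n-1)\delta_{ij}$ of symmetric conference matrices. Since each off-diagonal entry of $S$ is $\pm 1$ while the diagonal vanishes, the combinatorial structure of nonzero terms is controlled by $G_\pi$, while the $\delta_{ij}$ on the right-hand side acts as a vertex-merger that can be iterated.

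I would first pass from the restricted sum $N_n(\pi):=\sum_{a\in L_n(\pi)}\Delta(a(1),\ldots,a(k))$ to the unrestricted sum $T_n(\pi'):=\sum_{b\colon\pi'\to[n]}\prod_{x=1}^kS_{b(\pi'(x))b(\pi'(x+1))}$ via M\"obius inversion on the partition lattice, so that $N_n(\pi)=\sum_{\pi'\geq\pi}\mu(\pi,\pi')T_n(\pi')$. Each coarsening $\pi'\geq\pi$ corresponds to identifying blocks of $\pi$ to form a quotient graph $G_{\pi'}$. Next, I would evaluate $T_n(\pi')$ by iteratively contracting $G_{\pi'}$: summing over the label at a degree-$2$ vertex via the conference matrix identity either merges its two distinct neighbors and multiplies by $(n-1)$, or simply multiplies by $(n-1)$ if both neighbors coincide. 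When $G_{\pi'}$ is a loop-free cactus whose edges decompose into simple even cycles totaling $k$ edges, this process terminates cleanly with $T_n(\pi')=n(n-1)^{k/2}=\Theta(n^{k/2+1})$. When $G_{\pi'}$ has a self-loop, $T_n(\pi')=0$ since $S_{ii}=0$; when its cycle decomposition contains an odd cycle, contracting that cycle yields a factor $\operatorname{tr}(S^{\mathrm{odd}})=(n-1)^{(\mathrm{odd}-1)/2}\operatorname{tr}(S)=0$; and when $G_{\pi'}$ is Eulerian but not a cactus, leftover $\delta$-constraints in the contraction force $T_n(\pi')=o(n^{k/2+1})$.

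With these estimates in place, Part (i) follows by verifying that when $\pi\notin\operatorname{EC}(k,t)$, every coarsening $\pi'\geq\pi$ falls into one of the vanishing cases, so $V_n(\pi)\to 0$. For Part (ii), the relation $m=k-t+1$ is Euler's formula for a connected cactus ($V-E+F=2$ with $F=m+1$). To extract the limiting constant, I would split the M\"obius sum according to which cycles of $G_\pi$ are touched by each $\pi'$; the leading contribution factors across the $m$ cycles of $G_\pi$, and for a single cycle of length $2s$ the surviving coarsenings can be put in bijection with non-crossing structures enumerated by $C_{s-1}$, with M\"obius signs producing the $(-1)^{s-1}$ factor. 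Multiplying the per-cycle contributions yields the product formula with overall sign $(-1)^{k/2-m}=\prod_i(-1)^{s_i-1}$.

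The main obstacles are twofold. First, controlling $T_n(\pi')$ for non-cactus $G_{\pi'}$, notably those arising from crossing $\pi$: the iterative contraction no longer terminates with a clean product of $(n-1)$ factors, and one must bound the residual tensor sum by repeated application of $S^2=(n-1)I$ to produce enough $\delta$-constraints to beat the naive bound of $n^{|\pi'|}$ on $|T_n(\pi')|$. Second, verifying that the factorization in Part (ii) gives exactly $(-1)^{s-1}C_{s-1}$ per cycle; this requires carefully matching the M\"obius-weighted enumeration of block mergers within each cycle against the Catalan recursion, likely via an explicit bijection between the surviving coarsenings of a $2s$-cycle and non-crossing partitions of an $(s+1)$-set.
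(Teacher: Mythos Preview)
Your approach via M\"obius inversion to unrestricted sums $T_n(\pi')$ is genuinely different from the paper's direct induction on $V_n(\pi)$, and the tensor-network viewpoint is natural. However, there is a concrete gap: your claim that ``when $G_{\pi'}$ is Eulerian but not a cactus, leftover $\delta$-constraints in the contraction force $T_n(\pi')=o(n^{k/2+1})$'' is false, and this breaks the argument for both parts.

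Take $k=8$ and $\pi=\{\{1,5\},\{2\},\{3,7\},\{4\},\{6\},\{8\}\}$. This partition is crossing (via $1<3<5<7$), and $G_\pi$ is the complete bipartite graph $K_{2,4}$ on parts $\{\{1,5\},\{3,7\}\}$ and $\{\{2\},\{4\},\{6\},\{8\}\}$, which is biconnected but not a simple cycle, hence not a cactus. Contracting the four degree-$2$ vertices yields
\[
T_n(\pi)=\sum_{a,b\in[n]}\bigl((n-1)\delta_{ab}\bigr)^4=n(n-1)^4=\Theta(n^{k/2+1}),
\]
so this non-cactus term is \emph{not} lower order. Worse, this same $\pi$ arises as a coarsening of the non-crossing partition $\sigma=\{\{1,5\},\{2\},\{3\},\{4\},\{6\},\{7\},\{8\}\}\in\operatorname{EC}(8,7)$ (two $4$-cycles sharing a vertex) by merging $\{3\}$ and $\{7\}$ across the two cycles. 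Thus, for crossing $\pi$ it is \emph{not} true that every coarsening falls into a vanishing case, and for $\pi\in\operatorname{EC}(k,t)$ the M\"obius sum does \emph{not} factor across the cycles of $G_\pi$ in the simple way you describe: cross-cycle mergers produce leading-order terms that must cancel against other leading-order terms.

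The paper avoids this entirely by never passing to unrestricted sums. Instead it inducts directly on $t$ for $V_n(\pi)$: locate a singleton block $\{k\}$, split into the cases $\pi(1)=\pi(k-1)$ (where Lemma~\ref{deltaLemma}(v) peels off a $2$-cycle) and $\pi(1)\neq\pi(k-1)$ (where Lemma~\ref{deltaLemma}(iv) rewrites $V_n(\pi)$ as $-\sum_i V_n(\pi^i)$ over partitions with one fewer block), handle crossing partitions first (Lemma~\ref{crossingLemma}), then non-crossing cacti (Lemma~\ref{lem.cactus lemma}), and close via the Catalan convolution $\sum_{i=0}^{s-2}C_iC_{s-2-i}=C_{s-1}$. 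Your M\"obius route could perhaps be salvaged by proving the required cancellations directly, but that is a substantial argument you have not sketched, and it is likely to re-derive the same inductive structure in disguise.
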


The proof of Lemma~\ref{lem.key lemma} is rather technical (involving multiple rounds of induction), and so we save it for Section~3.
In the meantime, we demonstrate how Lemma~\ref{lem.key lemma} can be applied to prove that the coefficients in \eqref{eq.rewrite non-asymptotic expectation again} converge to the coefficients in \eqref{eq.limiting expression}.
Recall that a \textbf{Dyck path} of semi-length $n$ is a path in the plane from $(0,0)$ to $(2n,0)$ consisting of $n$ steps along the vector $(1,1)$, called \textbf{up-steps}, and $n$ steps along the vector $(1,-1)$, called \textbf{down-steps}, that never goes below the $x$-axis.
We say a Dyck path is \textbf{strict} if none of the path's interior vertices reside on the $x$-axis.
Each (strict) Dyck path determines a sequence of $2n$ letters from $\{U,D\}$ that represent up- and down-steps in the path; this sequence is known as a \textbf{(strict) Dyck word}.
With these notions, we may prove the following result by leveraging the fact that Borel's triangle counts so-called \textit{marked Dyck paths}~\cite{CaiY:19}; see Figure~\ref{fig.dyck paths} for an illustration.

\begin{figure}
\begin{center}
\includegraphics[width=\textwidth]{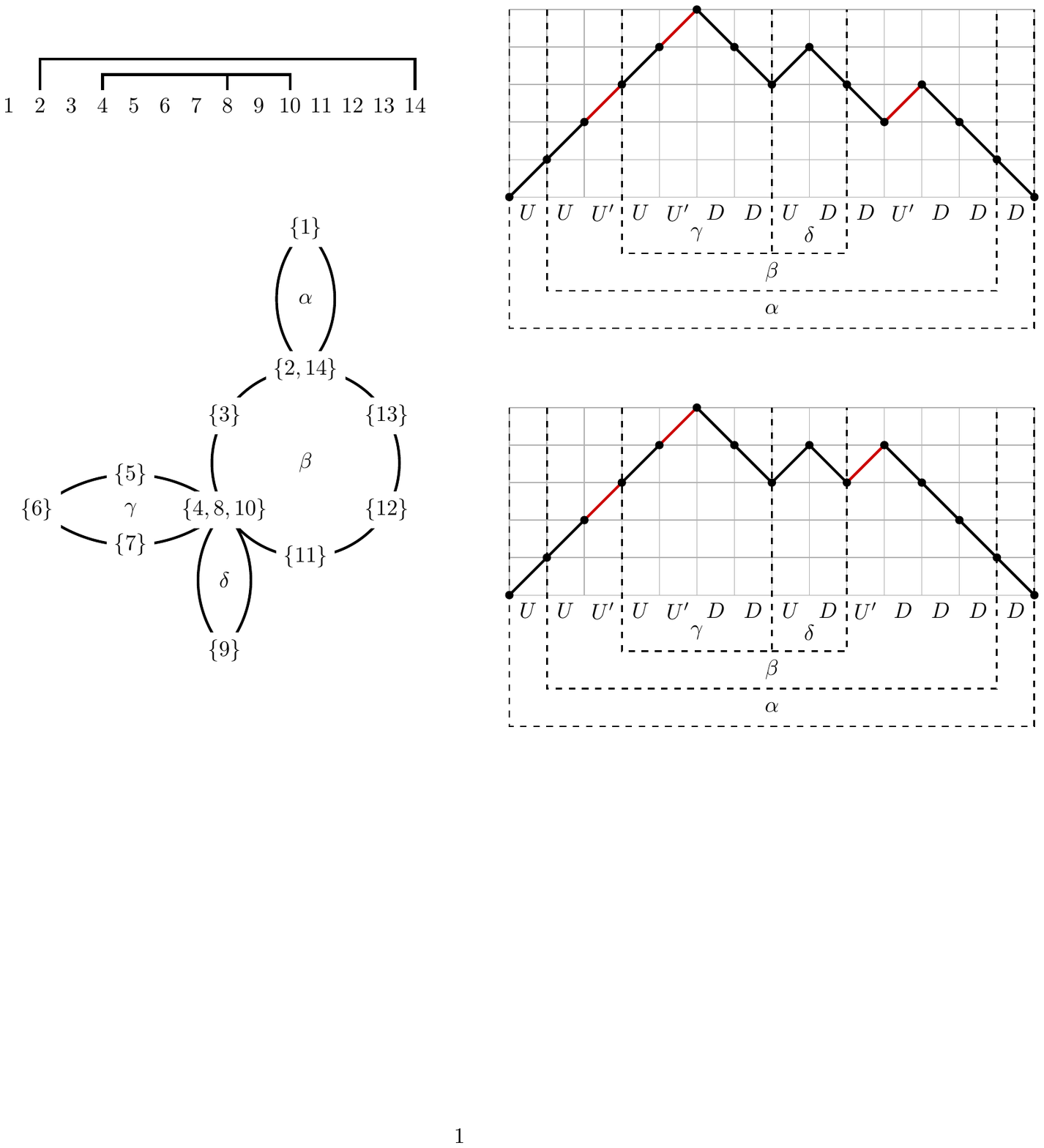}
\end{center}
\caption{\label{fig.dyck paths}
\textbf{(top left)}
Select $k=14$ and $t=11$, and consider the partition $\pi\in\Pi(k,t)$ with all singleton blocks except for $\{2,14\}$ and $\{4,8,10\}$.
Observe that $\pi$ is a non-crossing partition.
\textbf{(bottom left)}
We depict the corresponding graph $G_\pi$, whose vertices are the blocks of $\pi$.
By definition, blocks are adjacent in $G_\pi$ when they contain cyclicly adjacent members of $[k]$.
In this case, the edges of $G_\pi$ partition into four simple cycles, which we label $\alpha$, $\beta$, $\gamma$ and $\delta$.
\textbf{(right)}
Each simple cycle of $G_\pi$ is assigned a strict Dyck word of the cycle's length, and we mark all but the first up-steps.
The only choice for $\alpha$ and $\delta$ is $UD$, and the only choice for $\gamma$ is $UU'DD$; here, $U'$ denotes a marked up-step.
Meanwhile, $\beta$ has $C_2=2$ choices: $UU'DU'DD$ and $UU'U'DDD$.
For each selection, we traverse $G_\pi$ from $\pi(1)$ to $\pi(2)$, to $\pi(3)$, etc., to $\pi(14)$ and back to $\pi(1)$, labelling the edges of $G_\pi$ with the next letter from the current cycle's Dyck word.
The result is a Dyck word with $t-k/2-1=3$ marked up-steps, none of which at ground level.
We illustrate the corresponding marked Dyck paths above.
Notice that $G_\pi$ can be recovered from either marked Dyck path since a cycle is born with each un-marked up-step and dies once the Dyck path returns to its height from the birth of that cycle.
By Theorem~2 in~\cite{CaiY:19}, marked Dyck paths are counted by entries in Borel's triangle, which explains their appearance in~\eqref{eq.limiting expression}.
}
\end{figure}

\begin{lemma}
\label{lem.counting with borel}
It holds that
\[
\displaystyle\sum_{\pi\in\Pi(k,t)}V_n(\pi)
\to\left\{\begin{array}{cl}(-1)^{t-k/2-1}\cdot B(k/2-1,t-k/2-1)&\text{if $k$ is even and $t \geq k/2 + 1$}\\0&\text{otherwise.}\end{array}\right.
\]
\end{lemma}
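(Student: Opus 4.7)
The plan is to combine Lemma~\ref{fewblocksLemma} and Lemma~\ref{lem.key lemma} to reduce the sum to a weighted count over $\operatorname{EC}(k,t)$, and then to identify this count with $B(k/2-1,t-k/2-1)$ via the bijection with marked Dyck paths sketched in Figure~\ref{fig.dyck paths}.

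First I would dispatch the easy cases. If $k$ is odd, then $\operatorname{EC}(k,t)=\emptyset$, since any partition of the $k$ edges of $G_\pi$ into even simple cycles forces $k$ to be even; Lemma~\ref{lem.key lemma}(i) then yields $V_n(\pi)\to 0$ for every $\pi\in\Pi(k,t)$. If $t<k/2+1$, the same conclusion follows from Lemma~\ref{fewblocksLemma}. In both cases the full sum tends to zero, matching the stated limit.

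So assume $k$ is even and $t\geq k/2+1$. Only partitions in $\operatorname{EC}(k,t)$ contribute in the limit, and for every such $\pi$ the number of cycles is $m=k-t+1$, so the sign $(-1)^{k/2-m}=(-1)^{t-k/2-1}$ factors out of Lemma~\ref{lem.key lemma}(ii):
\[
\sum_{\pi\in\Pi(k,t)}V_n(\pi)\;\to\;(-1)^{t-k/2-1}\sum_{\pi\in\operatorname{EC}(k,t)}\prod_{i=1}^{m}C_{s_i(\pi)-1}.
\]
Since $C_{s-1}$ counts strict Dyck words of semi-length $s$, the inner sum equals the number of pairs $(\pi,w)$, where $\pi\in\operatorname{EC}(k,t)$ and $w$ assigns a strict Dyck word to each simple cycle of $G_\pi$. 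To each such pair I would associate a marked Dyck path by traversing the vertices $\pi(1),\pi(2),\ldots,\pi(k),\pi(1)$ of $G_\pi$ in order and labelling the edges consecutively with the letters of their cycle's Dyck word, marking every up-step except the first one of each cycle. Because $\pi$ is non-crossing and the cycles are simple, each un-marked up-step opens a cycle whose matching down-step is reached exactly when the walk returns to the block where the cycle began; hence the composite word is a Dyck path of semi-length $k/2$ carrying $\sum_i(s_i-1)=t-k/2-1$ marked up-steps, none at ground level (ground-level up-steps are always the first up-step of the outermost cycle they belong to, hence un-marked).

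I expect the bulk of the work to lie in verifying that this assignment is a bijection onto all marked Dyck paths of semi-length $k/2$ with $t-k/2-1$ marked up-steps, none at ground level. The inverse reads a marked Dyck path from left to right and births a new cycle at each un-marked up-step, declaring it dead at the matching down-step (the first return to the birth height); the nesting of cycles together with the block sequence visited by the walk reconstruct both $\pi$ (hence $t$, as the number of distinct blocks) and the per-cycle Dyck words. Once this bijection is established, Theorem~2 of~\cite{CaiY:19} identifies such marked Dyck paths as the combinatorial objects counted by $B(k/2-1,t-k/2-1)$, completing the proof.
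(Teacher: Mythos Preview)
Your proposal is correct and follows essentially the same route as the paper: dispatch the cases $k$ odd and $t<k/2+1$ via Lemma~\ref{fewblocksLemma} and Lemma~\ref{lem.key lemma}(i), factor out the common sign $(-1)^{t-k/2-1}$ using $m=k-t+1$, interpret $\prod_i C_{s_i-1}$ as a count of per-cycle strict Dyck words, and identify the resulting weighted count with marked Dyck paths via the bijection of Figure~\ref{fig.dyck paths} and Theorem~2 of~\cite{CaiY:19}. If anything, you spell out the forward and inverse maps of the bijection more carefully than the paper does, which simply asserts the decomposition $\operatorname{MD}(k,t)=\bigsqcup_{\pi\in\operatorname{EC}(k,t)}\operatorname{MD}(\pi)$ as an observation.
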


\begin{proof}
When $t<  k/2 + 1$, the result follows from Lemma~\ref{fewblocksLemma}, and when $k$ is odd, the $k$ edges in each $G_\pi$ fail to partition into even simple cycles, and so the result follows from Lemma~\ref{lem.key lemma}(i).
Now suppose $k$ is even and $t \geq k/2 + 1$.
For $\pi\in\operatorname{EC}(k,t)$, recall that the edges of $G_\pi$ are indexed by $[k]$ and partitioned into simple even cycles.
Define $\operatorname{MD}(\pi)$ to be the words $w\colon [k]\to\{U,U',D\}$ such that for every simple cycle in $G_\pi$ with edges indexed by $T\subseteq[k]$, the restriction $w|_T$ is a strict Dyck word with all but its first up-steps marked (here, $U'$ denotes a marked up-step).
Note that strict Dyck words of semi-length $s$ are in one-to-one correspondence with Dyck words of semi-length $s-1$, and so there are $C_{s-1}$ of them.
As such, Lemma~\ref{lem.key lemma} implies that for every $\pi\in\Pi(k,t)$, it holds that
\begin{equation}
\label{eq.value to marked dyke}
(-1)^{t-k/2-1}\cdot V_n(\pi)
\to\left\{\begin{array}{cl}|\operatorname{MD}(\pi)|&\text{if }\pi\in\operatorname{EC}(k,t)\\0&\text{otherwise.}\end{array}\right.
\end{equation}
Let $\operatorname{MD}(k,t)$ denote the set of marked Dyck words $w\colon [k]\to\{U,U',D\}$ with $t-k/2-1$ marked up-steps, none of which are at ground level.
We observe that
\begin{equation}
\label{eq.partition of marked dyke}
\operatorname{MD}(k,t)
=\bigsqcup_{\pi\in\operatorname{EC}(k,t)}\operatorname{MD}(\pi).
\end{equation}
Then equations \eqref{eq.value to marked dyke} and \eqref{eq.partition of marked dyke} together give
\[
(-1)^{t-k/2-1}\sum_{\pi\in\Pi(k,t)}V_n(\pi)
\to\sum_{\pi\in\operatorname{EC}(k,t)}|\operatorname{MD}(\pi)|
=|\operatorname{MD}(k,t)|
=B(k/2-1,t-k/2-1),
\]
where the last step applies Theorem~2 in~\cite{CaiY:19}.
\end{proof}

At this point, we are in a position to verify hypothesis~(i) from Proposition~\ref{prop.key proposition} in our case.
For hypothesis~(ii), we follow the approach suggested by Remark~2.4.5 in~\cite{Tao:11} of leveraging Talagrand concentration to bound the variance.
First, we pass to a setting that is more amenable to analysis with Talagrand concentration.
Here and throughout, for each $n\in L$, we fix an $n\times n$ matrix $F$ such that $F^\top F=I+\frac{1}{\sqrt{n}}S_n$.

\begin{lemma}
\label{lem.pass to Talagrand-friendly}
It holds that $\displaystyle \operatorname{Var}\big(\operatorname{tr}\big((\tfrac{1}{p\sqrt{n}}X)^k\big)\big)
\lesssim_{p,k} \max_{j\in[k]}\operatorname{Var}\big(\|FP\|_{S^{2j}}^{2j}\big)+n^{1/2}$.
\end{lemma}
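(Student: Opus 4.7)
The plan is to express $\operatorname{tr}\bigl((\tfrac{1}{p\sqrt{n}}X)^k\bigr)$ \emph{exactly} as a linear combination of the Schatten quantities $\|FP\|_{S^{2j}}^{2j}$ with $j\in[k]$, after which the variance bound is a one-line consequence of the triangle inequality for standard deviations.

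Substituting $S_n = \sqrt{n}(F^\top F - I)$ into $X = P S_n P$ gives $\tfrac{1}{p\sqrt{n}}X = \tfrac{1}{p}(M - P)$, where $M := PF^\top F P$. Since $P$ is a diagonal $\{0,1\}$-matrix, $P^2 = P$, which forces $PM = MP = M$. A straightforward induction then yields the clean expansion
\[
(M-P)^k = \sum_{j=1}^{k}\binom{k}{j}(-1)^{k-j} M^j + (-1)^k P,
\]
so, taking traces and using cyclicity,
\[
\operatorname{tr}\bigl((M-P)^k\bigr) = \sum_{j=1}^{k}\binom{k}{j}(-1)^{k-j}\|FP\|_{S^{2j}}^{2j} + (-1)^k N,
\]
where $N := |\mathcal{I}| = \operatorname{tr}(P)$.

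The key algebraic identity is $\|FP\|_{S^2}^2 = N$. Indeed, by cyclicity,
\[
\|FP\|_{S^2}^2 = \operatorname{tr}(P F^\top F) = \operatorname{tr}(P) + \tfrac{1}{\sqrt{n}}\operatorname{tr}(P S_n) = N,
\]
since $S_n$ has vanishing diagonal. This lets the stray $(-1)^k N$ term be absorbed into the $j=1$ Schatten quantity, producing the exact deterministic identity
\[
\operatorname{tr}\bigl((M-P)^k\bigr) = \sum_{j=1}^{k}\tilde{c}_j\,\|FP\|_{S^{2j}}^{2j}
\]
with $|\tilde{c}_j|\leq 2^k$. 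Iterating $\sqrt{\operatorname{Var}(A+B)}\leq\sqrt{\operatorname{Var}(A)}+\sqrt{\operatorname{Var}(B)}$ and applying Cauchy--Schwarz then gives
\[
\operatorname{Var}\bigl(\operatorname{tr}((\tfrac{1}{p\sqrt{n}}X)^k)\bigr) \lesssim_{p,k} \max_{j\in[k]}\operatorname{Var}\bigl(\|FP\|_{S^{2j}}^{2j}\bigr),
\]
which in fact sharpens the stated bound; the extra $n^{1/2}$ term is harmless slack that would also accommodate a less surgical argument bounding $\operatorname{Var}(N)=np(1-p)\lesssim_p n$ separately instead of folding $N$ into $\|FP\|_{S^2}^2$.

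No serious obstacle appears: the entire content is algebraic. The two small points to verify are the commutation $PM=MP=M$ (which causes $(M-P)^k$ to collapse to a tidy linear combination of $M^j$ and $P$) and the zero-diagonal condition on $S_n$ (which makes $\|FP\|_{S^2}^2$ equal to $N$ \emph{exactly} rather than up to an $O(n^{-1/2})$ error). Everything else is bookkeeping.
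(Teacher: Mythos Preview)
Your argument is correct and is essentially the same binomial-expansion idea the paper uses, but packaged more efficiently. The paper sets $Y=\frac{1}{p\sqrt{n}}PSP$, expands $\operatorname{tr}((Y+\tfrac{1}{p}P)^k)$ via the binomial theorem (using $YP=Y$), and then solves for $\operatorname{tr}(Y^k)$ \emph{recursively} in terms of $\|FP\|_{S^{2k}}^{2k}$, the lower traces $\operatorname{tr}(Y^j)$, and $\operatorname{tr}(P)$; the stated bound is then obtained by induction on $k$, with the $n^{1/2}$ slack absorbing the separate $\operatorname{tr}(P)$ term. You instead expand $(M-P)^k$ directly into $\sum_j \binom{k}{j}(-1)^{k-j}M^j + (-1)^kP$, which bypasses the induction entirely.

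The genuinely new ingredient in your write-up is the exact identity $\|FP\|_{S^2}^2=\operatorname{tr}(P)$, coming from the zero diagonal of $S_n$. The paper does not exploit this, and that is why the $\operatorname{tr}(P)$ contribution survives as an additive $n^{1/2}$ (really $\operatorname{Var}(\operatorname{tr} P)\asymp n$, which is in any case dominated by the $j=1$ Schatten term). Your observation folds it in cleanly and yields the sharper inequality $\operatorname{Var}(\operatorname{tr}((\tfrac{1}{p\sqrt{n}}X)^k))\lesssim_{p,k}\max_{j\in[k]}\operatorname{Var}(\|FP\|_{S^{2j}}^{2j})$. One cosmetic point: the equality $\tfrac{1}{p\sqrt{n}}X=\tfrac{1}{p}(M-P)$ should be read at the level of traces of powers (the matrices have different sizes); it would be cleaner to state it as $\operatorname{tr}((\tfrac{1}{p\sqrt{n}}X)^k)=\operatorname{tr}((\tfrac{1}{p}(M-P))^k)$, which is what you actually use.
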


\begin{proof}
Define $Y:=\frac{1}{p\sqrt{n}}PSP$, and observe that
\[
\operatorname{tr}(Y^k)
=\operatorname{tr}(\tfrac{1}{p\sqrt{n}}X^k),
\qquad
\operatorname{tr}\big((Y+\tfrac{1}{p}P)^k\big)
=\operatorname{tr}\big((\tfrac{1}{p}PF^\top FP)^k\big)
=\frac{1}{p^k}\|FP\|_{S^{2k}}^{2k}.
\]
Since $Y$ commutes with $P$ and $YP=Y$, the binomial theorem gives
\[
\operatorname{tr}\big((Y+\tfrac{1}{p}P)^k\big)
=\operatorname{tr}\sum_{j=0}^k\binom{k}{j}Y^j(\tfrac{1}{p}P)^{k-j}
=\sum_{j=1}^k\binom{k}{j}\frac{1}{p^{k-j}}\operatorname{tr}(Y^j)+\frac{1}{p^k}\operatorname{tr}(P),
\]
and so rearranging gives
\begin{align}
\nonumber
\operatorname{tr}(\tfrac{1}{p\sqrt{n}}X^k)
=\operatorname{tr}(Y^k)
&=\operatorname{tr}\big((Y+\tfrac{1}{p}P)^k\big)-\sum_{j=1}^{k-1}\binom{k}{j}\frac{1}{p^{k-j}}\operatorname{tr}(Y^j)-\frac{1}{p^k}\operatorname{tr}(P)\\
\label{eq.eigenshift identity}&=\frac{1}{p^k}\|FP\|_{S^k}^{2k}-\sum_{j=1}^{k-1}\binom{k}{j}\frac{1}{p^{k-j}}\operatorname{tr}(\tfrac{1}{p\sqrt{n}}X^j)-\frac{1}{p^k}\operatorname{tr}(P).
\end{align}
The following estimate holds for any choice of random variables $\{X_i\}_{i\in[m]}$:
\begin{align*}
\operatorname{Var}\bigg(\sum_{i=1}^m X_i\bigg)
=\sum_{i=1}^m\sum_{j=1}^m\operatorname{Cov}(X_i,X_j)
&\leq\sum_{i=1}^m\sum_{j=1}^m|\operatorname{Cov}(X_i,X_j)|\\
&\leq\sum_{i=1}^m\sum_{j=1}^m\sqrt{\operatorname{Var}(X_i)\operatorname{Var}(X_j)}
\leq m^2\cdot\max_{i\in[m]}\operatorname{Var}(X_i).
\end{align*}
The lemma follows from applying this estimate to \eqref{eq.eigenshift identity} by induction on $k$.
\end{proof}

Next, we establish the convexity and Lipschitz continuity required by Talagrand:

\begin{lemma}
\label{lem.convex lipschitz}
For each $k\in\mathbb{N}$, consider the mapping $f\colon \{x\in\mathbb{R}^n:\|x\|_\infty<2\}\to\mathbb{R}$ defined by $f(x)=\|F\operatorname{diag}(x)\|_{S^{2k}}^{2k}$.
Then $f$ is convex and $(8^kkn^{1-1/2k})$-Lipschitz.
\end{lemma}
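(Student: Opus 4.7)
The plan is to factor $f = g^{2k}$ where $g(x) := \|F\operatorname{diag}(x)\|_{S^{2k}}$, and derive both properties from facts about $g$. Convexity is nearly immediate: the map $x \mapsto F\operatorname{diag}(x)$ is linear, and the Schatten $2k$-norm is a genuine matrix norm, so $g$ is a seminorm in $x$, hence nonnegative and convex. Since $t \mapsto t^{2k}$ is convex and nondecreasing on $[0,\infty)$, the composition $f = g^{2k}$ is convex.

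For the Lipschitz bound (in $\|\cdot\|_2$), I will combine a $1$-Lipschitz estimate on $g$ with a uniform upper bound $g(x) < 2\sqrt{2}\,n^{1/2k}$ on the domain $\{\|x\|_\infty < 2\}$, and then use the telescoping identity $a^{2k} - b^{2k} = (a - b)(a^{2k-1} + \cdots + b^{2k-1})$. For the $1$-Lipschitz claim, the triangle inequality together with monotonicity of Schatten norms in $p$ yields
\[
|g(x) - g(y)| \leq \|F\operatorname{diag}(x-y)\|_{S^{2k}} \leq \|F\operatorname{diag}(x-y)\|_{S^2},
\]
and a direct computation gives $\|F\operatorname{diag}(z)\|_{S^2}^2 = \operatorname{tr}\bigl(\operatorname{diag}(z)\,F^\top F\,\operatorname{diag}(z)\bigr) = \sum_i z_i^2 (F^\top F)_{ii} = \|z\|_2^2$, because the diagonal entries of $F^\top F = I + n^{-1/2} S_n$ are all $1$. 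For the uniform bound on $g$, the Schatten inequality $\|A\|_{S^{2k}} \leq \operatorname{rank}(A)^{1/2k}\|A\|_{2\to 2}$ combined with $\|F\|_{2\to 2}^2 = \|I + n^{-1/2} S_n\|_{2\to 2} \leq 1 + \sqrt{(n-1)/n} < 2$ (using $\|S_n\|_{2\to 2} = \sqrt{n-1}$ from $S_n^\top S_n = (n-1)I$) gives $g(x) \leq n^{1/2k} \sqrt{2}\,\|x\|_\infty < 2\sqrt{2}\,n^{1/2k}$. The telescoping identity then yields
\[
|f(x) - f(y)| \leq 2k \cdot (2\sqrt{2})^{2k-1} \cdot n^{1 - 1/2k} \cdot \|x - y\|_2,
\]
and the numerical bound $(2\sqrt{2})^{2k-1} = 2^{3k - 3/2} \leq 8^k/2$ absorbs the factor of $2k$ to give the advertised constant $8^k k\, n^{1 - 1/2k}$.

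There is no real obstacle here; the only thing that requires minor care is the constant-counting, i.e., choosing the right Schatten comparisons so that the $\ell^\infty$ constraint $\|x\|_\infty < 2$ and the $\ell^2$-based Lipschitz property mesh to produce the exponent $n^{1 - 1/2k}$ rather than something like $n^{1/2}$. Using $\|\cdot\|_{S^{2k}} \leq \|\cdot\|_{S^2}$ for the Lipschitz step and $\|\cdot\|_{S^{2k}} \leq n^{1/2k}\|\cdot\|_{2\to 2}$ for the uniform bound is exactly the right pairing.
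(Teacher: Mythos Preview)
Your proof is correct and follows essentially the same approach as the paper: convexity via composition of a seminorm with $t\mapsto t^{2k}$, and the Lipschitz bound via factoring $u^{2k}-v^{2k}$ combined with the uniform estimate $g(x)<2\sqrt{2}\,n^{1/2k}$. The only notable difference is that you obtain the slightly sharper $1$-Lipschitz bound on $g$ by computing $\|F\operatorname{diag}(z)\|_{S^2}$ directly from the diagonal of $F^\top F$, whereas the paper uses the cruder estimate $\|F\operatorname{diag}(z)\|_F\le\|F\|_{2\to2}\|z\|_2\le\sqrt{2}\,\|z\|_2$; either way the final constant $8^k k\,n^{1-1/2k}$ comes out the same.
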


\begin{proof}
We adopt the shorthand notation $D_x:=\operatorname{diag}(x)$.
First, $f$ is convex since $\|\cdot\|_{S^{2k}}$ satisfies the triangle inequality and $t\mapsto t^{2k}$ is convex:
\[
f\big(\lambda x+(1-\lambda)y\big)
\leq \Big(\lambda\|FD_x\|_{S^{2k}}+(1-\lambda)\|FD_y\|_{S^{2k}}\Big)^{2k}
\leq \lambda f(x)+(1-\lambda) f(y).
\]
To compute a Lipschitz bound, we apply the factorization
\[
u^{2k}-v^{2k}
=(u-v)(u+v)\sum_{j=0}^{k-1}u^{2(k-1-j)}v^{2j}
\]
with $u:=\|FD_x\|_{S^{2k}}$ and $v:=\|FD_y\|_{S^{2k}}$ to get
\[
\big|f(x)-f(y)\big|
=|u^{2k}-v^{2k}|
=\bigg((u+v)\sum_{j=0}^{k-1}u^{2(k-1-j)}v^{2j}\bigg)\cdot|u-v|
\leq 8^kkn^{1-1/2k}\cdot|u-v|,
\]
where the last step follows from the fact that $\|FD_x\|_{2\to2}\leq 2\|F\|_{2\to2}\leq 2\sqrt{2}$, meaning $u\leq 2\sqrt{2}n^{1/2k}$ (and similarly for $v$).
Next, we apply the reverse triangle inequality to get
\begin{align*}
|u-v|
&=\big|\|FD_x\|_{S^{2k}}-\|FD_y\|_{S^{2k}}\big|\\
&\leq\|FD_x-FD_y\|_{S^{2k}}
\leq\|F(D_x-D_y)\|_F
\leq\|F\|_{2\to2}\cdot\|x-y\|_2
\leq \sqrt{2}\cdot\|x-y\|_2,
\end{align*}
which implies the result.
\end{proof}

Finally, we apply Talagrand concentration to obtain a variance bound:

\begin{lemma}
\label{lem.variance bound}
It holds that $\displaystyle \operatorname{Var}\big(\tfrac{1}{pn}\operatorname{tr}\big((\tfrac{1}{p\sqrt{n}}X)^k\big)\big)\lesssim_{p,k} n^{-1/k}$.
\end{lemma}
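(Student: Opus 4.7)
The plan is to assemble the three preceding lemmas. By Lemma \ref{lem.pass to Talagrand-friendly}, it suffices to show
\[
\max_{j\in[k]}\operatorname{Var}\bigl(\|FP\|_{S^{2j}}^{2j}\bigr)\lesssim_k n^{2-1/k},
\]
after which dividing by $p^2n^2$ yields the desired $n^{-1/k}$ bound (the $n^{1/2}$ term from Lemma \ref{lem.pass to Talagrand-friendly} is absorbed since $2-1/k\geq 1\geq 1/2$).

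To bound each $\operatorname{Var}(\|FP\|_{S^{2j}}^{2j})$, I will write $P=\operatorname{diag}(\xi)$ where $\xi\in\{0,1\}^n$ has i.i.d.\ Bernoulli$(p)$ entries, so that $\|FP\|_{S^{2j}}^{2j}=f_j(\xi)$ with $f_j$ as in Lemma \ref{lem.convex lipschitz}. Since $|\xi_i|\leq 1<2$, the support of $\xi$ lies in the domain of $f_j$, and Lemma \ref{lem.convex lipschitz} guarantees that $f_j$ is convex and $\sigma_j$-Lipschitz with $\sigma_j:=8^jjn^{1-1/2j}$. Applying Proposition \ref{prop.talagrand} with $b=1$ then gives
\[
\mathbb{P}\bigl\{|f_j(\xi)-\mathbb{E}f_j(\xi)|\geq t\bigr\}\lesssim e^{-t^2/c\sigma_j^2}
\]
for every $t\geq 0$.

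I will convert this tail bound into a variance bound by the standard identity $\operatorname{Var}(Y)\leq\int_0^\infty 2t\,\mathbb{P}(|Y-\mathbb{E}Y|>t)\,dt$, which yields
\[
\operatorname{Var}\bigl(\|FP\|_{S^{2j}}^{2j}\bigr)\lesssim \sigma_j^2=64^jj^2 n^{2-1/j}\lesssim_j n^{2-1/j}.
\]
Since $j\mapsto n^{2-1/j}$ is increasing in $j$, the maximum over $j\in[k]$ is attained at $j=k$, contributing $\lesssim_k n^{2-1/k}$. Plugging this into Lemma \ref{lem.pass to Talagrand-friendly} and dividing by $p^2n^2$ completes the proof.

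There is no real obstacle here: all of the technical work has been front-loaded into Lemmas \ref{lem.pass to Talagrand-friendly} and \ref{lem.convex lipschitz}. The only things to verify are that the Bernoulli vector $\xi$ meets the hypotheses of Talagrand's inequality (which it does, with $b=1$), that the Lipschitz constant from Lemma \ref{lem.convex lipschitz} produces the correct exponent $2-1/k$ after squaring and dividing by $n^2$, and that the max over $j\in[k]$ in Lemma \ref{lem.pass to Talagrand-friendly} is dominated by the endpoint $j=k$.
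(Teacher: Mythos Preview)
Your approach is essentially the paper's: reduce to bounding $\operatorname{Var}(\|FP\|_{S^{2j}}^{2j})$ via Lemma~\ref{lem.pass to Talagrand-friendly}, then apply Talagrand concentration using the convexity and Lipschitz bound from Lemma~\ref{lem.convex lipschitz}, integrate the tail to get a variance of order $\sigma_j^2\lesssim_k n^{2-1/k}$, and divide by $p^2n^2$.

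There is one technical point you have skipped. Proposition~\ref{prop.talagrand} is stated for functions $f\colon\mathbb{R}^n\to\mathbb{R}$, whereas the $f_j$ of Lemma~\ref{lem.convex lipschitz} is only defined (and only shown to be convex and Lipschitz) on the open box $R:=\{x:\|x\|_\infty<2\}$. Saying ``the support of $\xi$ lies in the domain of $f_j$'' does not by itself license an appeal to Proposition~\ref{prop.talagrand} as written. The paper closes this gap by passing to the smallest convex extension
\[
\tilde f(x):=\sup_{x_0\in R}\ \sup_{z\in\partial f(x_0)}\bigl(f(x_0)+\langle z,x-x_0\rangle\bigr),
\]
which is globally convex, inherits the same Lipschitz constant, and agrees with $f$ on $R$; since the Bernoulli vector lies in $R$ almost surely, $\tilde f(B)$ has the same distribution as $\|FP\|_{S^{2k}}^{2k}$, and Proposition~\ref{prop.talagrand} now applies verbatim. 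With that one-line fix, your argument matches the paper's proof.
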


\begin{proof}
Given the mapping $f$ from Lemma~\ref{lem.convex lipschitz}, define $\tilde{f}\colon\mathbb{R}^n\to\mathbb{R}$ in terms of subgradients by
\[
\tilde{f}(x)
:=\sup_{x_0\in R}\sup_{z\in\partial f(x_0)}\Big(f(x_0)+\langle z,x-x_0\rangle\Big).
\]
This is known as the smallest convex extension of $f$ to $\mathbb{R}^n$, and it is straightforward to verify that $\tilde{f}$ is convex and $(8^kkn^{1-1/2k})$-Lipschitz with $\tilde{f}|_R=f$.
Let $B\in\mathbb{R}^n$ have independent entries, each equal to $1$ with probability $p$ and $0$ otherwise.
Since $B\in R$ almost surely, it holds that $\tilde{f}(B)$ has the same distribution as $\|FP\|_{S^{2k}}^{2k}$, and we let $E$ denote its expectation.
By Talagrand concentration (Proposition~\ref{prop.talagrand}), there exists $c>0$ such that
\begin{align*}
\operatorname{Var}\big(\|FP\|_{S^{2k}}^{2k}\big)
=\mathbb{E}\Big[\big(\|FP\|_{S^{2k}}^{2k}-E\big)^2\Big]
&=\int_0^\infty\mathbb{P}\Big\{\big(\|FP\|_{S^{2k}}^{2k}-E\big)^2\geq u\Big\}du\\
&=\int_0^\infty\mathbb{P}\Big\{\big|\tilde{f}(B)-E\big|\geq\sqrt{u}\Big\}du\\
&\lesssim\int_0^\infty \operatorname{exp}\bigg(\frac{-u}{c\cdot8^{2k}k^2n^{2-1/k}}\bigg)du
=c\cdot8^{2k}k^2n^{2-1/k}.
\end{align*}
Combining with Lemma~\ref{lem.pass to Talagrand-friendly} then gives
\begin{align*}
\operatorname{Var}\big(\tfrac{1}{pn}\operatorname{tr}\big((\tfrac{1}{p\sqrt{n}}X)^k\big)\big)
&=\frac{1}{p^2n^2}\operatorname{Var}\big(\operatorname{tr}\big((\tfrac{1}{p\sqrt{n}}X)^k\big)\big)\\
&\lesssim_{p,k} \frac{1}{n^2}\Big(\max_{j\in[k]}\operatorname{Var}\big(\|FP\|_{S^{2j}}^{2j}\big)+n^{1/2}\Big)
\lesssim_k n^{-1/k},
\end{align*}
as desired.
\end{proof}

We may now verify hypotheses~(i) and~(ii) from Proposition~\ref{prop.key proposition} in our case.

\begin{proof}[Proof of Theorem~\ref{thm.main result}]
Put $Z_n:=\frac{1}{p\sqrt{n}}X_n$ and $\mu:=\mu_{\operatorname{KM}(1/p)}$.
First, we modify the random measure $\mu_{Z_n}$ so that we may apply Proposition~\ref{prop.key proposition} to prove the result.
Indeed, $\mu_{Z_n}$ fails to be a probability measure with probability $(1-p)^n$, since $\mu_{Z_n}=0$ when $\mathcal{I}=\mathcal{I}_n$ is the empty set.
To rectify this, we define
\[
\zeta_n
:=\left\{\begin{array}{cl}\mu_{Z_n}&\text{if }\mathcal{I}_n\neq\emptyset\\\delta_0&\text{otherwise.}\end{array}\right.
\]
Then it suffices to prove $\zeta_n\to\mu$ almost surely, since the Borel--Cantelli lemma implies $1_{\{\mathcal{I}_n=\emptyset\}}\to0$ almost surely, and so
\[
\mu_{Z_n}(a,b)
=\zeta_n(a,b)-1_{\{\mathcal{I}_n=\emptyset\}}\cdot1_{\{0\in(a,b)\}}
\stackrel{\text{a.s.}}{\longrightarrow} \mu(a,b)
\]
for every $a,b\in\mathbb{R}$ with $a<b$.
Conveniently, for every $n\in L$ and $k\in\mathbb{N}$, it holds that
\[
\int_\mathbb{R}x^kd\zeta_n(x)
=\int_\mathbb{R}x^kd\mu_{Z_n}(x)
\]
almost surely, and so the left-hand side inherits moments from the right-hand side.

To apply Proposition~\ref{prop.key proposition}, we first observe that 
\[
\|Z_n\|_{2\to2}=\frac{1}{p\sqrt{n}}\|X_n\|_{2\to2}\leq\frac{1}{p\sqrt{n}}\|S_n\|_{2\to2}\leq\frac{1}{p}
\]
almost surely, and so $\{\zeta_n\}_{n\in L}$ are uniformly bounded, and therefore uniformly subgaussian.
Similarly, $\mu$ is bounded and therefore subgaussian.
Fix $k\in\mathbb{N}$.
As a consequence of Lemma~\ref{lem.counting with borel}, it holds that
\[
\mathbb{E}\frac{1}{pn}\operatorname{tr}(Z_n^k)
\to\int_\mathbb{R}x^kd\mu(x),
\]
and so by Lemma~\ref{lem.moment approximation}, we have
\[
\mathbb{E}\int_\mathbb{R}x^kd\zeta_n(x)
=\mathbb{E}\int_\mathbb{R}x^kd\mu_{Z_n}(x)
\to\int_\mathbb{R}x^kd\mu(x).
\]
As such, $\{\zeta_n\}_{n\in L}$ satisfies hypothesis~(i) from Proposition~\ref{prop.key proposition}.
Next, Lemma~\ref{lem.variance bound} establishes that $\operatorname{Var}\big(\tfrac{1}{pn}\operatorname{tr}(Z_n^k)\big)\lesssim_{p,k}n^{-1/k}$, and so Lemma~\ref{lem.moment approximation} implies
\[
\operatorname{Var}\bigg(\int_\mathbb{R}x^kd\zeta_n(x)\bigg)
=\operatorname{Var}\bigg(\int_\mathbb{R}x^kd\mu_{Z_n}(x)\bigg)
\lesssim_{p,k}n^{-1/k}+n^{-1/2}
\lesssim n^{-1/(k+1)}.
\]
Writing $L=\{n_i:i\in\mathbb{N}\}$, select $\lambda>1$ such that $n_{i+1}\geq \lambda n_i$ for every $i\in\mathbb{N}$.
Then
\begin{align*}
\sum_{n\in L}\operatorname{Var}\bigg(\int_\mathbb{R}x^kd\zeta_n(x)\bigg)
&\lesssim_{p,k} \sum_{n\in L}n^{-1/(k+1)}\\
&\leq \sum_{i=0}^\infty (\lambda^i n_1)^{-1/(k+1)}
=n_1^{-1/(k+1)}\sum_{i=0}^\infty (\lambda^{-1/(k+1)})^i
<\infty.
\end{align*}
As such, $\{\zeta_n\}_{n\in L}$ also satisfies hypothesis~(ii) from Proposition~\ref{prop.key proposition}, and so $\zeta_n\to\mu$ almost surely, as desired.
\end{proof}

\section{Proof of Lemma~\ref{lem.key lemma}}

It remains to compute, for each $\pi \in \Pi(k,t)$, the limit of
\[
V_n(\pi) = \frac{1}{n^{k/2 + 1}} \sum_{a \in L_n(\pi)} \Delta(a(1), \ldots, a(k)),
\]
where $L_n(\pi)$ is the set of $a \colon [k] \rightarrow [n]$ whose level sets are the blocks of $\pi$ and
\[
\Delta(a(1), \ldots, a(k)) = S_{a(1)a(2)} S_{a(2)a(3)} \cdots S_{a(k)a(1)}.
\]
We begin with some basic properties of $\Delta$.

\begin{lemma}\label{deltaLemma}
For every $a_1, \ldots, a_k \in [n]$, each of the following holds:
\begin{enumerate}
\item[(i)]
If $a_1 \neq a_2$, then $\Delta(a_1,a_2) = 1$.
\item[(ii)]
If $a_j = a_{j + 1}$ for any $j \in [k-1]$ or $a_k = a_1$, then $\Delta(a_1, \ldots, a_k) = 0$.
\item[(iii)]
If $\sigma$ is any cyclic permutation of $[k]$, then $\Delta(a_{\sigma(1)}, \ldots, a_{\sigma(k)})=\Delta(a_1, \ldots, a_k)$.
\item[(iv)]
If $a_1 \neq a_{k - 1}$, then $\sum_{b \in [n]} \Delta(a_1, \ldots, a_{k-1},b) = 0$.
\item[(v)]
If $a_1 = a_{k - 1}$ and $a_1 \neq a_k$, then $\Delta(a_1, \ldots, a_k) = \Delta(a_1, \ldots, a_{k - 2})$.
\end{enumerate}
\end{lemma}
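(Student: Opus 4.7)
The plan is to verify each of the five items by straightforward applications of the defining properties of a symmetric conference matrix, namely (i) $S_{ii}=0$ for all $i$, (ii) $S_{ij}\in\{\pm1\}$ for $i\neq j$, and (iii) $S^\top S=(n-1)I$, together with the symmetry $S^\top=S$, which gives $S^2=(n-1)I$. All five facts reduce to manipulations inside the cyclic product $\Delta(a_1,\ldots,a_k)=S_{a_1a_2}\cdots S_{a_ka_1}$.

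First, parts (i)--(iii) are essentially definitional. For (i), expand $\Delta(a_1,a_2)=S_{a_1a_2}S_{a_2a_1}=S_{a_1a_2}^2$ by symmetry; since $a_1\neq a_2$ the entry is $\pm 1$ and the square equals $1$. For (ii), if some consecutive pair of indices collide (including the wrap-around pair $a_k,a_1$), the corresponding factor is $S_{ii}=0$, so the whole product vanishes. For (iii) the cyclic invariance is built into the definition: conjugating $(a_1,\ldots,a_k)$ by a cyclic shift of $[k]$ merely rotates which factor is listed first in the product $S_{a_1a_2}\cdots S_{a_ka_1}$, leaving the value unchanged.

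The content of the lemma lies in (iv) and (v). For (iv), factor out everything that does not depend on $b$:
\[
\sum_{b\in[n]}\Delta(a_1,\ldots,a_{k-1},b)
=\Bigl(\prod_{j=1}^{k-2}S_{a_ja_{j+1}}\Bigr)\cdot\sum_{b\in[n]}S_{a_{k-1}b}S_{ba_1}.
\]
By symmetry of $S$ the inner sum is $(S^2)_{a_{k-1}a_1}=(n-1)\delta_{a_{k-1}a_1}$, which vanishes whenever $a_1\neq a_{k-1}$, proving (iv). For (v), with $a_{k-1}=a_1$ and $a_k\neq a_1$ the last two factors of $\Delta(a_1,\ldots,a_k)$ collapse:
\[
S_{a_{k-1}a_k}S_{a_ka_1}=S_{a_1a_k}S_{a_ka_1}=S_{a_1a_k}^2=1,
\]
using symmetry and the fact that the entry is $\pm 1$ off the diagonal. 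Absorbing the trivial pair leaves $S_{a_1a_2}\cdots S_{a_{k-3}a_{k-2}}S_{a_{k-2}a_{k-1}}$, and substituting $a_{k-1}=a_1$ in the last factor yields exactly the $k-2$ factors of $\Delta(a_1,\ldots,a_{k-2})$, establishing (v).

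There is no serious obstacle here; the only point to handle carefully is the wrap-around factor $S_{a_ka_1}$ in part (ii) and the bookkeeping in (v) to verify that the reduced product genuinely matches the $(k-2)$-argument $\Delta$ rather than being off by one index. Both are immediate once the definition of $\Delta$ is expanded symbolically.
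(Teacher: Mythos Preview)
Your proof is correct and follows essentially the same approach as the paper: each item is verified directly from the conference-matrix axioms, with (iv) reduced to the off-diagonal vanishing of $S^2=(n-1)I$ and (v) reduced to the collapse $S_{a_1a_k}S_{a_ka_1}=1$. The arguments match the paper's almost line for line.
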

\begin{proof}
First, (i) follows from the fact that $S$ is symmetric with off-diagonal entries in $\{\pm 1\}$.
Next, (ii) follows from the fact that the diagonal entries of $S$ are 0.
Recalling the definition of $\Delta$, then (iii) follows from commutativity.
Next suppose $a_{k - 1} \neq a_1$.  Then
\[
\sum_{b \in [n]} \Delta(a_1, \ldots, a_{k-1},b)
= S_{a_1a_2} \cdots S_{a_{k-2}a_{k-1}} \sum_{b \in [n]} S_{a_{k-1}b}S_{ba_1},
\]
and (iv) follows since $\sum_{b \in [n]} S_{a_{k - 1}b} S_{ba_1}$ is the $(a_{k-1},a_1)$ entry of $S^2 = (n - 1) I$.
Finally, in the case where $a_1 = a_{k - 1}$, we have
\[
\Delta(a_1, \ldots, a_k) = S_{a_1a_2} \cdots S_{a_{k-2}a_1}S_{a_1a_k}S_{a_ka_1},
\]
and (v) follows since $S_{a_1a_k}S_{a_ka_1} = 1$ provided $a_1 \neq a_k$.
\end{proof}

Let $\pi$ be a partition of $[k]$.  Recall that for $j \in [k]$, we let $\pi(j)$ denote the block of $\pi$ containing $j$.  We extend this notation to any integer $j$ by considering $\pi(j)$ to be the block of $\pi$ containing a representative of $j$ modulo $k$.  For convenience, we record the following immediate consequence of Lemma~\ref{deltaLemma}(iii).

\begin{lemma}\label{cycleLemma}
Let $\pi$ be a partition of $[k]$ and fix $j \in \mathbb{Z}$.  Define $\pi'$ to be the partition of $[k]$ with $\pi'(i) = \pi(i - j)$ for all $i \in [k]$.  Then
$
	V_n(\pi') = V_n(\pi).
$
\end{lemma}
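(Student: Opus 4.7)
The plan is to exhibit an explicit bijection between $L_n(\pi)$ and $L_n(\pi')$ under which corresponding terms of $\Delta$ agree by the cyclic invariance in Lemma~\ref{deltaLemma}(iii). Given $a \in L_n(\pi)$, I would define $\phi(a) \colon [k] \to [n]$ by $\phi(a)(i) := a(i-j)$, where $i-j$ is taken modulo $k$ (interpreted as an element of $[k]$). Since cyclic shifting of the index set is an invertible operation on functions $[k] \to [n]$, it is clear that $\phi$ is a bijection from functions to functions; the only genuine point is checking that it maps $L_n(\pi)$ onto $L_n(\pi')$.

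For that step, I would note that the level sets of $\phi(a)$ are precisely $\{B+j : B \in \pi\}$, where shifting is again modulo $k$. By the definition of $\pi'$ via $\pi'(i) = \pi(i-j)$, these shifted blocks are exactly the blocks of $\pi'$. Hence $\phi(a) \in L_n(\pi')$, and the inverse map $a' \mapsto a'(\cdot + j)$ shows $\phi$ is a bijection $L_n(\pi) \to L_n(\pi')$.

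Next, I would compare the summands. For any $a \in L_n(\pi)$,
\[
\Delta(\phi(a)(1), \ldots, \phi(a)(k)) = \Delta(a(1-j), a(2-j), \ldots, a(k-j)),
\]
and the right-hand side is obtained from $\Delta(a(1), \ldots, a(k))$ by applying the cyclic permutation $\sigma(i) = i - j \pmod k$ to the arguments. By Lemma~\ref{deltaLemma}(iii) the two are equal. Summing over the bijection $\phi$ therefore gives
\[
V_n(\pi') = \frac{1}{n^{k/2+1}} \sum_{a \in L_n(\pi)} \Delta(\phi(a)(1), \ldots, \phi(a)(k)) = \frac{1}{n^{k/2+1}} \sum_{a \in L_n(\pi)} \Delta(a(1), \ldots, a(k)) = V_n(\pi),
\]
completing the argument.

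There is no real obstacle here; the only minor care required is in the bookkeeping of indices modulo $k$ when verifying that $\phi(a)$ has level sets equal to $\pi'$, but this is routine once the notational conventions introduced just before the lemma are unwound.
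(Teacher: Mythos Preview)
Your proof is correct and follows exactly the approach the paper intends: the paper records this lemma as an immediate consequence of Lemma~\ref{deltaLemma}(iii), and you have simply spelled out the bijection and the application of cyclic invariance in detail.
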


To establish Lemma~\ref{lem.key lemma}(i), we will show separately that $V_n(\pi) \rightarrow 0$ for every crossing partition $\pi \in \Pi(k,t)$ and that $V_n(\pi) \rightarrow 0$ for every non-crossing partition $\pi \in \Pi(k,t)$ such that $G_\pi$ contains an odd cycle. 

%\begin{lemma}\label{singletonCrossingLemma}
%Let $\pi \in \Pi(k,t)$ be a crossing partition with $t > k/2$.  Then there is some singleton block $\{j\} \in \pi$ such that $\pi(j - 1) \neq \pi(j + 1)$.
%\end{lemma}
%\begin{proof}
%Observe that such partitions exist only when $k \geq 6$, and the lemma is straightforward to verify for $k = 6$.  For $k \geq 7$, the pigeonhole principle guarantees that every $\pi \in \Pi(k,t)$ with $t > k/2$ contains a singleton block $\{j\} \in \pi$.  Writing out $\pi = \{B_1, \ldots, B_{t - 1}, \{j\}\}$, observe that $\pi \setminus \{j\}$ forms a crossing partition of $[k] \setminus \{j\}$ into $t - 1 > (k - 1)/2$ blocks.  This leads to a new crossing partition $\pi' \in \Pi(k - 1, t - 1)$ given by
%\[
%\pi'(i) = \begin{cases} \pi(i) & \text{if } i \in [1,j - 1] \\ \pi(i + 1) & \text{if } i \in [j,k-1]. \end{cases}
%\]
%The lemma then follows by induction on $k$.
%\end{proof}

\begin{lemma}\label{crossingLemma}
Let $\pi \in \Pi(k,t)$ be a crossing partition.  Then $V_n(\pi) \rightarrow 0$.
\end{lemma}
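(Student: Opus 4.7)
The plan is strong induction on $t$, the number of blocks. The base case $t \leq k/2$ (equivalently, $t < k/2 + 1$) is immediate from Lemma~\ref{fewblocksLemma}, so I may assume $t \geq k/2 + 1$. Under this hypothesis, $\pi$ must contain at least one singleton block $\{j\}$: if every block had size $\geq 2$, we would have $k \geq 2t > k$. The argument splits into two cases according to how the singletons of $\pi$ meet their cyclic neighbors.

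\textbf{Case (a):} some singleton $\{j\}$ satisfies $\pi(j-1) \neq \pi(j+1)$. Conditioning on $a' := a|_{[k] \setminus \{j\}}$, the two $c$-dependent factors in $\Delta(a)$ are $S_{a'(j-1), c} S_{c, a'(j+1)}$, so Lemma~\ref{deltaLemma}(iv) gives $\sum_{c \in [n]} \Delta = 0$. Writing $R = \{a'(i) : i \neq j\}$ (of size $t - 1$), this yields $\sum_{c \in [n] \setminus R} \Delta = -\sum_{c \in R} \Delta$. On the right, each $c = a'(B')$ corresponds to merging $\{j\}$ into block $B'$; the choices $B' \in \{\pi(j-1), \pi(j+1)\}$ kill the summand by Lemma~\ref{deltaLemma}(ii), and the remaining ones produce contributions $T_n(\pi_{B'})$ for partitions $\pi_{B'} \in \Pi(k, t - 1)$. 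A singleton cannot witness a crossing, so every crossing of $\pi$ persists in each $\pi_{B'}$. This gives $V_n(\pi) = -\sum_{B'} V_n(\pi_{B'})$, and the inductive hypothesis applied to each crossing $\pi_{B'}$ forces $V_n(\pi) \to 0$.

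\textbf{Case (b):} every singleton $\{j\}$ satisfies $\pi(j-1) = \pi(j+1) =: B_0$. Pick one such $\{j\}$. Lemma~\ref{cycleLemma} lets me position $j$ cyclically as the last coordinate, whereupon Lemma~\ref{deltaLemma}(v) yields $\Delta(a) = \Delta(a^{(j)})$, where $a^{(j)}$ is the length-$(k - 2)$ sequence obtained by deleting $a(j-1)$ and $a(j)$. Letting $\pi^{(j)} \in \Pi(k - 2, t - 1)$ be the partition obtained from $\pi$ by dropping $\{j\}$, removing $j - 1$ from $B_0$, and relabeling, each $a^{(j)} \in L_n(\pi^{(j)})$ lifts to $n - t + 1$ elements of $L_n(\pi)$ (the free values of $a(j)$). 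Hence $V_n(\pi) = \frac{n - t + 1}{n} V_n(\pi^{(j)})$, and it will suffice, by the inductive hypothesis, to check that $\pi^{(j)}$ is still crossing.

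I expect this last check to be the main technical hurdle. Crossings of $\pi$ not involving $B_0$ pass through unaffected. If $A = B_0$ (say) with witnesses $a_1 < b_1 < a_2 < b_2$ and neither $a_1$ nor $a_2$ equals $j - 1$, the crossing also persists in $\pi^{(j)}$. The delicate case is $a_1 = j - 1$, where I plan to use $j + 1 \in B_0 \setminus \{j - 1\}$ in place of $a_1$. The required interleaving $j + 1 < b_1 < a_2 < b_2$ holds provided $b_1 \geq j + 2$, which is forced by $b_1 \in B$, $B \neq \{j\}$ (so $b_1 \neq j$), and $B \neq B_0$ (so $b_1 \neq j + 1$). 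The remaining pathological configuration $(a_1, a_2) = (j - 1, j + 1)$ would force $b_1 \in (j - 1, j + 1) \cap B$, i.e., $b_1 = j$, contradicting $\{j\} \neq B$; so it is ruled out. The case $a_2 = j - 1$ is symmetric. This confirms $\pi^{(j)}$ remains crossing, and the induction closes.
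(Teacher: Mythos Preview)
Your argument is correct and follows the same scheme as the paper: induct on $t$, dispose of $t<k/2+1$ via Lemma~\ref{fewblocksLemma}, locate a singleton $\{j\}$, and split into the case $\pi(j-1)\neq\pi(j+1)$ (use Lemma~\ref{deltaLemma}(iv) to redistribute $j$ among the other blocks, each resulting $\pi_{B'}\in\Pi(k,t-1)$ still crossing) versus $\pi(j-1)=\pi(j+1)$ (use Lemma~\ref{deltaLemma}(v) to strip two indices, the resulting $\pi^{(j)}\in\Pi(k-2,t-1)$ still crossing). The only organisational difference is that the paper fixes one singleton and branches on its neighbours, whereas you branch on whether \emph{some} singleton has unequal neighbours; this is immaterial.

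One small wrinkle worth tightening in Case~(b): after invoking Lemma~\ref{cycleLemma} to place $j$ at position $k$, your crossing-preservation analysis treats $j-1,j,j+1$ as ordinary consecutive integers and handles only ``$A=B_0$ (say)''. With $j=k$ the constraint $a_1<b_1<a_2<b_2\le k-1$ forces $a_1,a_2\le k-2$, so your delicate sub-cases $a_1=j-1$ and $a_2=j-1$ are vacuous; the only genuine collision is $b_2=k-1$, which puts you in the $B=B_0$ situation, and replacing $b_2$ by $j+1\equiv 1$ yields witnesses in the order $1<a_1<b_1<a_2$ rather than $a_1<b_1<a_2<j+1$. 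The repair is immediate (read the interleaving cyclically, or shift so that $2\le j\le k-1$), and in fact the paper simply asserts without proof that the restricted partition remains crossing, so your treatment is already more thorough than the original.
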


\begin{proof}
For $\pi \in \Pi(k,t)$ to be a crossing partition, it must hold that $t \geq 2$ and $k \geq 4$.  Observe that the case $t = 2$ follows immediately from Lemma~\ref{fewblocksLemma} since $k \geq 4$.
Now consider $t>2$ and suppose the lemma has been established for every crossing partition on $t - 1$ blocks.  By Lemma~\ref{fewblocksLemma}, we may further suppose that $k$ satisfies $t \geq k/2 + 1$. Then for $\pi \in \Pi(k,t)$, the pigeonhole principle guarantees that $\pi$ contains a singleton block $\{j\} \in \pi$.  By Lemma~\ref{cycleLemma}, we may assume $\{k\} \in \pi$.
We proceed in cases:

\textbf{Case I:} $\pi(1) = \pi(k - 1)$.  
We may apply Lemma~\ref{deltaLemma}(v) to obtain
\begin{align*}
V_n(\pi) &= \frac{1}{n^{k/2 + 1}} \sum_{a \in L_n(\pi)} \Delta(a(1), \ldots, a(k - 2), a(1), a(k))\\
&= \frac{1}{n^{k/2}} \sum_{a \in L_n(\pi \setminus \{k\})} \Delta(a(1), \ldots, a(k - 2)) + o(1).
\end{align*}
The restriction of $\pi \setminus \{k\}$ to $[k - 2]$ results in a crossing partition $\pi'$ of $[k - 2]$ into $t - 1$ blocks.  Moreover, the above expression for $V_n(\pi)$ implies
\[
V_n(\pi) = V_n(\pi') + o(1),
\]
and so our induction hypothesis provides $V_n(\pi) \rightarrow 0$.

\textbf{Case II:} $\pi(1) \neq \pi(k - 1)$.
Writing out $\pi = \{B_1, \ldots, B_{t - 1}, \{k\}\}$, we choose representatives $j_1, \ldots, j_{t - 1} \in [k - 1]$ with $\pi(j_i) = B_i$.  Then by Lemma~\ref{deltaLemma}(iv), we have
\begin{align*}
V_{n}(\pi)
&=\frac{1}{n^{k/2+1}}\sum_{a \in L_n(\pi \setminus \{k\})}
\sum_{\substack{a_k \in [n] \\ a_k \not \in a([k - 1])}} \Delta(a(1),\ldots,a(k-1),a_k)\\
&= -\frac{1}{n^{k/2 + 1}}\sum_{a \in L_n(\pi \setminus \{k\})} \sum_{i = 1}^{t - 1} \Delta(a(1), \ldots, a(k - 1), a(j_i)).
\end{align*}
For $i,j \in [t - 1]$, we define new blocks
\[
B_j^i := \begin{cases} B_j \cup \{k\} & \text{if } j = i \\ B_j & \text{if } j \neq i \end{cases}
\]
and the corresponding crossing partition $\pi^i = \{B_1^i, \ldots, B_{t - 1}^i\} \in \Pi(k,t-1)$.  Then
\[
V_n(\pi) = -\sum_{i = 1}^{t - 1} \frac{1}{n^{k/2 + 1}} \sum_{a \in L_n(\pi \setminus \{k\})}  \Delta(a(1), \ldots, a(k - 1), a(j_i)) = -\sum_{i = 1}^{t - 1} V_n(\pi^i).
\]
Since each $V_n(\pi^i) \rightarrow 0$ by our induction hypothesis, we see $V_n(\pi) \rightarrow 0$ as well.
\end{proof}

For non-crossing partitions, we will study the structure of the graph $G_\pi$ for $\pi \in \Pi(k,t)$, which we recall has vertex set $\pi$ and edges $\pi(j) \leftrightarrow \pi(j + 1)$ for all $j \in [k]$. Observe that if $G_\pi$ has a loop, then $\pi(j) = \pi(j + 1)$ for some $j \in [k]$, and so $V_n(\pi) = 0$ by Lemma~\ref{deltaLemma}(ii).  For this reason, we direct our attention to \textbf{loop-free} partitions $\pi$, that is, partitions $\pi$ for which $G_\pi$ is loop-free.

Given a loop-free graph $G$ on vertices $V$ with edges $E$, we say $v \in V$ is a \textbf{cut vertex} if the induced subgraph of $G$ on $V \setminus \{v\}$ is disconnected.
A graph with no cut vertices is called \textbf{biconnected}, and the \textbf{biconnected components} of a graph are its maximal biconnected subgraphs.
%The \textbf{block-cut tree} of $G$ is the bipartite graph whose vertices are both the cut vertices and the biconnected components of $G$, and we say that a cut vertex is adjacent to a biconnected component if, in $G$, it resides in that biconnected component.
When the biconnected components of $G$ are all simple cycles, we call $G$ a \textbf{cactus}.

\begin{lemma}\label{cactusLemma}
If $\pi \in \Pi(k,t)$ is a loop-free non-crossing partition, then $t\geq k/2+1$ and $G_\pi$ is a cactus whose edges partition into $k - t + 1$ simple cycles.
\end{lemma}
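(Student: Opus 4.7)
The plan is to prove by strong induction on $k$ that $G_\pi$ is a connected cactus, and then read off the rest of the lemma from standard cactus arithmetic. Indeed, in a connected cactus with $t$ vertices, $k$ edges, and $c$ simple cycles (each of length at least $2$ by loop-freeness), summing cycle lengths and subtracting one for each of the $c-1$ cut-vertex gluings in the block-cut tree yields $t = k - c + 1$, whence the bound $k \geq 2c$ rearranges to $t \geq k/2 + 1$. Connectedness of $G_\pi$ is immediate from the closed walk $\pi(1) \to \pi(2) \to \cdots \to \pi(k) \to \pi(1)$.

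The base case of the induction is $t = k$, i.e.\ $\pi$ consists entirely of singletons, in which case $G_\pi$ is a simple $k$-cycle and there is nothing to check. For the inductive step I would choose a block $B = \{b_1 < b_2 < \cdots < b_m\} \in \pi$ with $m \geq 2$ and examine the $m$ gaps $I_i := \{b_i + 1, \ldots, b_{i+1} - 1\}$ (indices mod $m$) carved out by $B$. Loop-freeness forces each $|I_i| \geq 1$, and the non-crossing property forces every block of $\pi$ other than $B$ to lie inside a single $I_i$. After relabeling, I would form a cyclic partition $\tilde\pi^{(i)}$ of $[|I_i| + 1]$ whose blocks are the restriction of $\pi$ to $I_i$ together with a fresh singleton representing $B$. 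A direct check confirms that $\tilde\pi^{(i)}$ is non-crossing (a singleton crosses nothing) and loop-free (the new cyclic adjacencies involve the singleton and the blocks containing $b_i + 1$ and $b_{i+1} - 1$, which differ from $B$ by loop-freeness of $\pi$), and that $|I_i| + 1 < k$ since $m \geq 2$. The inductive hypothesis then makes each $G_{\tilde\pi^{(i)}}$ a cactus.

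To conclude, I would identify $G_\pi$ with the graph obtained by gluing $G_{\tilde\pi^{(1)}}, \ldots, G_{\tilde\pi^{(m)}}$ at their $m$ singleton-vertices, all collapsed to the single vertex $B$; the cyclic decomposition $[k] = \bigsqcup_{i=1}^m \{b_i, b_i + 1, \ldots, b_{i+1} - 1\}$ partitions the edges of $G_\pi$ among the $G_{\tilde\pi^{(i)}}$'s, and the count $\sum_i (|I_i| + 1) = (k - m) + m = k$ confirms the bookkeeping. The cactus property passes under this gluing because $B$ becomes a cut vertex of the glued graph, so any simple cycle must lie entirely within a single $G_{\tilde\pi^{(i)}}$. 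The main obstacle I anticipate is the careful verification that restriction of a cyclic non-crossing partition to a sub-arc $I_i \cup \{\star\}$ remains non-crossing, together with the edge-level bookkeeping of the gluing; once these are settled, the cactus conclusion and the numerical consequences of the lemma follow at once.
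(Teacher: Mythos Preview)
Your proof is correct and takes a genuinely different route from the paper's. The paper fixes $k$ and inducts on $k-t$: it peels a single element $j$ off a multi-element block $B$ to form $\pi'\in\Pi(k,t+1)$, invokes the inductive hypothesis to make $G_{\pi'}$ a cactus with $k-t$ cycles, then argues (via the non-crossing hypothesis) that the new vertices $\{j\}$ and $B\setminus\{j\}$ must lie in the \emph{same} simple cycle of $G_{\pi'}$, so that re-identifying them splits that cycle in two and yields a cactus with $k-t+1$ cycles. You instead induct on $k$: you fix an entire multi-element block $B$, cut the cyclic ground set into the $m$ arcs between consecutive elements of $B$, apply the hypothesis to each arc augmented by a singleton, and glue the resulting $m$ cacti at the common vertex $B$. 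Your decomposition is more global and exhibits the tree-of-cycles structure of $G_\pi$ in one stroke, whereas the paper's one-element-at-a-time splitting tracks the cycle count directly through the induction. Both arguments rest on the same underlying combinatorial fact---non-crossing confines every other block to a single arc determined by $B$, equivalently keeps split pieces inside a common cycle---but package it differently. One small remark: your derivation of $t=k-c+1$ via ``$c-1$ cut-vertex gluings'' reaches the right answer but is loosely phrased (the block--cut tree need not be a path); the cleanest route is simply that $G_\pi$ is connected by its defining closed walk, and in a cactus the biconnected-component cycles are exactly the independent cycles, so $c$ equals the cyclomatic number $k-t+1$.
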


\begin{proof}
First, suppose $G_\pi$ is a cactus whose edges partition into $k - t + 1$ simple cycles.
Since $G_\pi$ has no loops, the number of cycles is at least at least half the number of edges, that is, $k-t+1\geq k/2$.
Rearranging then gives $t\geq k/2+1$.
It remains to verify that $G_\pi$ is, indeed, a cactus whose edges partition into $k - t + 1$ simple cycles.

Fixing $k$, we proceed by induction on $k - t$.
If $\pi \in \Pi(k,k)$, then $G_\pi$ is itself a simple cycle and hence a cactus.
For $k - t > 0$, we now consider a loop-free non-crossing partition $\pi \in \Pi(k,t)$.
By the pigeonhole principle, we may select $B \in \pi$ such that $B$ contains at least two elements of $[k]$.
Let $j$ denote the least element of $B$.
Writing $B = \{j\} \sqcup B'$, we consider $\pi' \in \Pi(k,t + 1)$ defined by $\pi' = (\pi \setminus B) \cup \{B', \{j\}\}$.
Since $\pi'$ is also loop-free and non-crossing, our induction hypothesis guarantees that $G_{\pi'}$ is a cactus with $k - t$ simple cycles.
Our task is to use this information to show that $G_\pi$ is a cactus with $k - t + 1$ simple cycles.

Suppose first that $\{j\}$ and $B'$ reside in the same simple cycle of $G_{\pi'}$.
Then the simple cycles of $G_{\pi'}$ not containing $\{j\}$ and $B'$ remain simple cycles and biconnected components of $G_{\pi}$.
Moreover, by identifying $\{j\}$ and $B'$, we see that the simple cycle of $G_{\pi'}$ containing $\{j\}$ and $B'$ corresponds to two simple cycles of $G_{\pi}$ sharing the cut vertex $B=\{j\}\sqcup B'$.
As such, $G_{\pi}$ contains $k - t + 1$ biconnected components, each of which is a simple cycle.

We now claim that $\{j\}$ and $B'$ must reside in the same simple cycle of $G_{\pi'}$, in which case we are done by the previous paragraph.
Suppose instead that there exists a cut vertex $X \in \pi' \setminus \{\{j\},B'\}$ that separates $\{j\}$ and $B'$ within $G_{\pi'}$, and select $j' \in [k]$ with $\pi(j') = B'$.
Since $j$ is the least element of $B$, we necessarily have $j < j'$.
Furthermore, since $X$ separates $\{j\}$ and $B'$, we can traverse along a trail in $G_{\pi'}$ from $\{j\}$ to $X$, to $B'$, and back to $X$ to obtain indices $i \in (j,j')$ and $i' \in (j',k+j)$ with $\pi'(i) = \pi'(i') = X$.
These indices $j < i < j' < i'$ satisfy $\pi(j) = \pi(j')$ and $\pi(i) = \pi(i')$, contradicting our assumption that $\pi$ is non-crossing.
Hence, $\{j\}$ and $B'$ must reside in the same simple cycle of $G_{\pi'}$ as claimed.
\end{proof}

\begin{lemma}
\label{lem.cactus lemma}
For every loop-free non-crossing $\pi\in\Pi(k,t)$, each of the following holds:
\begin{itemize}
\item[(i)]
If $G_\pi$ contains any odd cycles, then $V_n(\pi) \rightarrow 0$.
\item[(ii)]
If the edges of $G_\pi$ partition into $m$ simple cycles of sizes $2s_1,\ldots,2s_m$, then
\[
V_n(\pi) \rightarrow (-1)^{k/2 - m}\cdot C_{s_1-1}\cdots C_{s_m-1}.
\]
\end{itemize}
\end{lemma}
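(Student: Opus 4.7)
The plan is to prove both parts of Lemma~\ref{lem.cactus lemma} simultaneously by strong induction on $k$, leveraging the cactus structure supplied by Lemma~\ref{cactusLemma}. The base case $k=2$ is direct: the only loop-free $\pi\in\Pi(2,2)$ is $\{\{1\},\{2\}\}$, and $V_n(\pi)=n(n-1)/n^2\to 1=(-1)^{0}C_0$, consistent with $m=1$ and $s_1=1$.

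For the inductive step, I first record that $\deg_{G_\pi}(B)=2|B|$ equals twice the number of cycles of the cactus through $B$, so every leaf cycle of length $\ell$ (one sharing at most a single vertex with the rest of the cactus) contains at least $\ell-1$ singleton blocks of $\pi$. I then split into two cases. In \textbf{Case 1}, $G_\pi$ has a leaf $2$-cycle; this gives a singleton $\{j\}$ with $\pi(j-1)=\pi(j+1)$. After a cyclic shift via Lemma~\ref{cycleLemma}, I apply Lemma~\ref{deltaLemma}(v) to contract the palindromic triple $(a(j-1),a(j),a(j+1))$ to a single entry, obtaining $V_n(\pi)=V_n(\pi')+o(1)$ for a loop-free non-crossing $\pi'\in\Pi(k-2,t-1)$ whose cactus is $G_\pi$ with the leaf $2$-cycle peeled off. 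Since $k/2-m$ is invariant under this move and the removed $2$-cycle contributes $C_0=1$, the inductive hypothesis yields the claim.

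In \textbf{Case 2}, no leaf $2$-cycle exists, so there is a leaf cycle of length $\ell\geq3$, and by the degree count I can pick a singleton $\{j\}$ on it with $\pi(j-1)\neq\pi(j+1)$. Mirroring Case~II of the proof of Lemma~\ref{crossingLemma}, Lemma~\ref{deltaLemma}(iv) yields
\[
V_n(\pi)=-\sum_{i}V_n(\pi^{(i)})+o(1),
\]
where $\pi^{(i)}$ merges $\{j\}$ into the $i$-th block of $\pi\setminus\{\{j\}\}$. Merges with $\pi(j\pm1)$ introduce loops and vanish by Lemma~\ref{deltaLemma}(ii); merges with blocks outside the leaf cycle produce crossings and vanish by Lemma~\ref{crossingLemma}; the surviving merges split the leaf cycle at the new cut vertex into two sub-cycles whose lengths sum to $\ell$. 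For part~(i), any split of an odd cycle leaves an odd sub-cycle, and an odd cycle elsewhere in the cactus persists through every merge, so each non-vanishing $V_n(\pi^{(i)})$ is driven to $0$ by the inductive hypothesis on part~(i). For part~(ii), with $\ell=2s$, only even splits avoid producing odd sub-cycles (the odd ones again vanish by part~(i)), leaving contributions $C_{a-1}C_{s-a-1}$ as the cut position ranges over $a\in\{1,\ldots,s-1\}$; the sign from Lemma~\ref{deltaLemma}(iv) combines with the parity flip of $(-1)^{k/2-m}$ caused by $m\mapsto m+1$, and the Catalan convolution $C_{s-1}=\sum_{a=1}^{s-1}C_{a-1}C_{s-a-1}$ reassembles the desired product.

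The main obstacle will be the combinatorial book-keeping in Case~2: verifying which of the $t-1$ merges preserve both non-crossing and loop-free structure, pinning down the exact pair of sub-cycles produced by each such merge, and confirming that the alternating signs and Catalan factors align precisely with the recursion. A secondary technicality is certifying that merges with blocks off the leaf cycle are genuinely crossing (rather than introducing some disguised loop or second kind of degeneracy), so that Lemma~\ref{crossingLemma} cleanly disposes of them.
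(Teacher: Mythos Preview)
Your overall strategy mirrors the paper's proof almost exactly: locate a singleton block, apply Lemma~\ref{deltaLemma}(v) when its two neighbors coincide (your Case~1, the paper's Case~I) to peel off a $2$-cycle, or apply Lemma~\ref{deltaLemma}(iv) when they differ (your Case~2, the paper's Case~II) to expand over the missing value; then discard crossing merges via Lemma~\ref{crossingLemma}, discard odd-split merges via part~(i), and reassemble the even-cycle contributions with the Catalan convolution. Your organization around \emph{leaf} cycles and the degree identity $\deg_{G_\pi}(B)=2|B|$ is a cosmetic variant of the paper's choice of an arbitrary singleton, and the combinatorial bookkeeping you flag as the main obstacle is handled exactly as the paper does (merges off the cycle are crossing because any path from $\{j\}$ to such a block passes through a cut vertex, giving interleaved indices).

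There is, however, a genuine gap in your induction scheme. You declare strong induction on $k$, but in Case~2 the merged partitions $\pi^{(i)}$ lie in $\Pi(k,t-1)$: the parameter $k$ does not drop, so your inductive hypothesis is unavailable precisely where you invoke it (``each non-vanishing $V_n(\pi^{(i)})$ is driven to $0$ by the inductive hypothesis''). The paper inducts on $t$, which decreases by one in both cases (Case~I passes from $\Pi(k,t)$ to $\Pi(k-2,t-1)$, Case~II from $\Pi(k,t)$ to $\Pi(k,t-1)$), with the base case $t=2$ forcing $k=2$ via Lemma~\ref{cactusLemma}. Replacing your induction on $k$ with induction on $t$---or, if you prefer, on $(k,t)$ lexicographically---repairs the argument with no other changes.
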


\begin{proof}
Any loop-free non-crossing partition must have at least two blocks.
When $t = 2$, we may assume $k =2$ by Lemma~\ref{fewblocksLemma} so that the only partition under consideration is $\{\{1\},\{2\}\}$, in which case $m=1$ and $s_1=1$.
Lemma~\ref{deltaLemma}(i) allows us to verify the result in this case:
\[
V_n(\{\{1\},\{2\}\})
= \frac{1}{n^2} \sum_{\substack{a : [2] \rightarrow [n] \\ a(1) \neq a(2)}} \Delta(a(1),a(2))
= \frac{n(n - 1)}{n^2}
\rightarrow 1
= (-1)^{k/2-m}\cdot C_{s_1-1}.
\]
Now consider $t>2$, and suppose the lemma has been established for every loop-free non-crossing partition on $t-1$ blocks.
By Lemma~\ref{cactusLemma}, we may assume that $k$ satisfies $t\geq k/2+1$.
Then for $\pi\in\Pi(k,t)$, the pigeonhole principle guarantees that $\pi$ contains a singleton block $\{j\}\in\pi$.
By Lemma~\ref{cycleLemma}, we may assume $\{k\}\in\pi$.
We proceed in cases:

\textbf{Case I:} $\pi(1)=\pi(k-1)$.
We may apply Lemma~\ref{deltaLemma}(v) to obtain
\begin{align*}
V_n(\pi) &= \frac{1}{n^{k/2 + 1}} \sum_{a \in L_n(\pi)} \Delta(a(1), \ldots, a(k - 2), a(1), a(k))\\
&= \frac{1}{n^{k/2}} \sum_{a \in L_n(\pi \setminus \{k\})} \Delta(a(1), \ldots, a(k - 2)) + o(1).
\end{align*}
The restriction of $\pi \setminus \{k\}$ to $[k - 2]$ results in a loop-free non-crossing partition $\pi'$ of $[k - 2]$ into $t - 1$ blocks.
Moreover, the above expression for $V_n(\pi)$ implies
\begin{equation}
\label{eq.pinch loop free}
V_n(\pi) = V_n(\pi') + o(1).
\end{equation}
For (i), observe that if $G_\pi$ contains any odd cycles, then $G_{\pi'}$ must also contain odd cycles.
In this case, we may apply our induction hypothesis to $V_n(\pi')$ to conclude $V_n(\pi) \rightarrow 0$.
For (ii), the edges of $G_\pi$ partition into $m$ simple cycles of sizes $2s_1, \ldots, 2s_{m}$ with $s_m=1$, and so the edges of $G_{\pi'}$ partition into $m - 1$ simple cycles of sizes $2s_1, \ldots, 2s_{m - 1}$.
Then \eqref{eq.pinch loop free} and our induction hypothesis together imply
\[
V_n(\pi) \rightarrow (-1)^{k/2 - m}\cdot C_{s_1 - 1} \cdots C_{s_{m - 1} - 1}.
\]
Since $C_0 = 1$, this establishes (ii).

\textbf{Case II:} $\pi(1)\neq\pi(k-1)$.
In this case, $\pi(k)$ necessarily resides in a cycle of length $\ell\geq3$.
Select representatives $k,j_2,\ldots,j_t\in[k]$ with $\pi(k)=B_1$ and $\pi(j_i)=B_i$ so that the vertices in the cycle are given by $B_1,\ldots,B_\ell$.
Then we may apply Lemma~\ref{deltaLemma}(iv) to obtain
\begin{align*}
V_n(\pi) &= \frac{1}{n^{k/2 + 1}} \sum_{a \in L_n(\pi \setminus \{k\})} \sum_{\substack{a_k \in [n] \\ a_k \not \in a([k-1])}} \Delta(a(1), \ldots, a(k - 1), a_k)\\
&= -\frac{1}{n^{k/2 + 1}} \sum_{a \in L_n(\pi \setminus \{k\})} \sum_{i=2}^{t} \Delta(a(1), \ldots, a(k - 1), a(j_i)).
\end{align*}
For $i,j \in [2,t]$, define new blocks
\[
B_j^i = \begin{cases} B_j \cup \{k\} & \text{if } j = i \\ B_j & \text{if } j \neq i \end{cases}
\]
and the corresponding partitions $\pi^i = \{B_2^i, \ldots, B_t^i\}$, we have
\[
V_n(\pi) = -\sum_{i=2}^{t} V_n(\pi^i).
\]
By Lemma~\ref{crossingLemma}, $V_n(\pi^i) \rightarrow 0$ whenever $\pi^i$ is a crossing partition.  Since $\pi^i$ is obtained from $\pi$ by merging blocks $B_i$ and $\{k\}$, we can argue as in the proof of Lemma~\ref{cactusLemma} to conclude that $\pi^i$ is crossing if and only if $B_i$ and $\{k\}$ do not reside in the same simple cycle of $G_\pi$.  Hence,
 \[
 V_n(\pi) = -\sum_{i = 2}^{\ell} V_n(\pi^i) + o(1),
 \]
where each $\pi^i$ is non-crossing for $2 \leq i \leq \ell $.  Both $\pi^2$ and $\pi^{\ell}$ contain loops, so $V_n(\pi^2) = V_n(\pi^{\ell}) = 0$ by Lemma~\ref{deltaLemma}(ii).
When $\ell = 3$, this gives $V_n(\pi) \rightarrow 0$, as desired by (i).
Supposing for the remainder that $\ell \geq 4$, we must still compute the limit of
\begin{equation}\label{suminCycle}
V_n(\pi) = -\sum_{i = 3}^{\ell - 1} V_n(\pi^i) + o(1).
\end{equation}
Observe that our cycle $\{B_1, \ldots, B_\ell\}$ in $G_\pi$ of length $\ell$ corresponds to the two simple cycles in $G_{\pi^i}$ of $\{B_2^i, \ldots, B_i^i\}$ and $\{B_i^i, B_{i + 1}^i, \ldots, B_{\ell}^i\}$ with lengths $i-1$ and $\ell - i+1$ and share the cut vertex $B_i^i$.  Moreover, all other simple cycles are identical between the two graphs.

If $\ell$ is odd, then for each $i \in [3,\ell-1]$, either $i-1$ or $\ell - i+1$ is odd, and so $G_{\pi^i}$ must have an odd cycle.  Since each $\pi^i$ has $t - 1$ blocks and an odd cycle, we can apply our induction hypothesis to conclude that each $V_n(\pi^i) \rightarrow 0$ so that $V_n(\pi) \rightarrow 0$, as desired by (i).  Suppose instead that $\ell$ is even, but $G_\pi$ has an odd cycle.  This odd cycle is also contained in each $G_{\pi^i}$ for $i \in [3,\ell - 1]$, and again we can apply our induction hypothesis to conclude that $V_n(\pi) \rightarrow 0$, thereby establishing (i).

Finally, for (ii), suppose that $\ell$ is even and that the edges of $G_\pi$ partition into $m$ cycles of lengths $2s_1, \ldots, 2s_m$ with $2s_m = \ell$.  Notice that if $i-1$ is odd, then $G_{\pi^i}$ contains an odd cycle, and $V_n(\pi^i) \rightarrow 0$.  Since the contribution of these terms is negligible, we must compute the limit of
\[
V_n(\pi) = -\sum_{i = 1}^{\ell/2 - 1} V_n(\pi^{2i+1}) + o(1).
\]
The cycles of lengths $2s_1, \ldots, 2s_{m - 1}$ are common to both $G_\pi$ and $G_{\pi^{2i+1}}$, while the cycle of length $\ell = 2s_m$ in $G_\pi$ corresponds to two cycles of length $2i$ and $\ell - 2i$ in $G_{\pi^{2i+1}}$.  Applying our induction hypothesis, we have
\[
V_n(\pi) \rightarrow (-1)^{k/2 - m} \cdot C_{s_1 - 1} \cdots C_{s_{m - 1} - 1} \sum_{i = 1}^{\ell/2 - 1} C_{i - 1}C_{\ell/2 - i - 1}.
\]
Reindexing and applying the convolution identity for Catalan numbers, we have
\[
\sum_{i = 1}^{\ell/2 - 1} C_{i - 1}C_{\ell/2 - i - 1} = \sum_{i = 0}^{\ell/2 - 2} C_i C_{\ell/2 - i - 2} = C_{\ell/2 - 1}.
\]
Hence, $V_n(\pi) \rightarrow (-1)^{k/2 - m} \cdot C_{s_1 - 1} \cdots C_{s_m - 1}$, thereby establishing (ii).
\end{proof}

\begin{proof}[Proof of Lemma~\ref{lem.key lemma}]
To prove (i), consider $\pi \in \Pi(k,t) \setminus \operatorname{EC}(k,t)$.
Then either $\pi$ is crossing, $G_\pi$ contains a loop, or $G_\pi$ contains an odd cycle.
If $\pi$ is crossing, then $V_n(\pi) \rightarrow 0$ by Lemma~\ref{crossingLemma}.
If $G_\pi$ contains a loop, then $V_n(\pi) = 0$ by Lemma~\ref{deltaLemma}(ii).
If $\pi$ is loop-free and non-crossing but $G_\pi$ contains an odd cycle, then $V_n(\pi) \rightarrow 0$ by Lemma~\ref{lem.cactus lemma}(i).
This establishes~(i).
Finally, (ii) follows from applying both Lemmas~\ref{cactusLemma} and~\ref{lem.cactus lemma}(ii).
\end{proof}

\section*{Acknowledgments}
MM and DGM were partially supported by AFOSR FA9550-18-1-0107.
DGM was also supported by NSF DMS 1829955 and the Simons Institute of the Theory of Computing.


\begin{thebibliography}{WW}

\bibitem{BandeiraFMW:13}
A.\ S.\ Bandeira, M.\ Fickus, D.\ G.\ Mixon, P.\ Wong,
The road to deterministic matrices with the restricted isometry property,
J.\ Fourier Anal.\ Appl.\ 19 (2013) 1123--1149.

\bibitem{BandeiraMM:17}
A.\ S.\ Bandeira, D.\ G.\ Mixon, J.\ Moreira,
A conditional construction of restricted isometries,
Int.\ Math.\ Res.\ Notices 2017 (2017) 372--381.

\bibitem{Bargmann:64}
V.\ Bargmann,
Note on Wigner's theorem on symmetry operations,
J.\ Math.\ Phys.\ 5 (1964) 862--868.

\bibitem{BenedettoF:03}
J.\ J.\ Benedetto, M.\ Fickus,
Finite normalized tight frames,
Adv.\ Comput.\ Math.\ 18 (2003) 357--385.

\bibitem{BourgainDFKK:11}
J.\ Bourgain, S.\ Dilworth, K.\ Ford, S.\ Konyagin, D.\ Kutzarova,
Explicit constructions of RIP matrices and related problems,
Duke Math.\ J.\ 159 (2011) 145--185.

\bibitem{CaiY:19}
Y.\ Cai, C.\ Yan,
Counting with Borel's triangle,
Discrete Math.\ 342 (2019) 529--539.

\bibitem{CandesRT:06}
E.\ J.\ Cand\`{e}s, J.\ Romberg, T.\ Tao,
Robust uncertainty principles:\ Exact signal reconstruction from highly incomplete frequency information,
IEEE Trans.\ Inf.\ Theory 52 (2006) 489--509.

\bibitem{CasazzaFTW:06}
P.\ G.\ Casazza, M.\ Fickus, J.\ C.\ Tremain, E.\ Weber,
The Kadison--Singer problem in mathematics and engineering---a detailed account,
in:\ Operator Theory, Operator Algebras, and Applications,
D.\ Han, P.\ E.\ T.\ Jorgensen, D.\ R.\ Larson (eds.),
Contemporary Mathematics, vol. 414, Providence, RI: American Mathematical Society, 2006, pp.\ 299--355. 

\bibitem{DaubechiesGM:86}
I.\ Daubechies, A.\ Grossmann, Y.\ Meyer,
Painless nonorthogonal expansions,
J.\ Math.\ Phys.\ 27 (1986) 1271--1283.

\bibitem{DeVore:07}
R.\ A.\ DeVore,
Deterministic constructions of compressed sensing matrices,
J.\ Complexity 23 (2007) 918--925.

\bibitem{Donoho:06}
D.\ L.\ Donoho,
Compressed sensing,
IEEE Trans.\ Inf.\ Theory 52 (2006) 1289--1306.

\bibitem{DubbsE:15}
A.\ Dubbs, A.\ Edelman,
Infinite random matrix theory, tridiagonal bordered Toeplitz matrices, and the moment problem,
Linear Algebra Appl.\ 467 (2015) 188--201.

\bibitem{DuffinS:52}
R.\ J.\ Duffin, A.\ C.\ Schaeffer,
A class of nonharmonic Fourier series,
Trans.\ Am.\ Math.\ Soc.\ 72 (1952) 341--366.

\bibitem{FickusM:15}
M.\ Fickus, D.\ G.\ Mixon,
Tables of the existence of equiangular tight frames,
arXiv:1504.00253

%\bibitem{FranciscoMS:15}
%C.\ A.\ Francisco, J.\ Mermin, J.\ Schweig,
%Catalan numbers, binary trees, and pointed pseudotriangulations,
%European J.\ Combin.\ 45 (2015) 85--96.

\bibitem{HaikinZG:18}
M.\ Haikin, R.\ Zamir, M.\ Gavish,
Frame Moments and Welch Bound with Erasures,
ISIT 2018, 2057--2061.

\bibitem{HaikinZG:17}
M.\ Haikin, R.\ Zamir, M.\ Gavish,
Random subsets of structured deterministic frames have MANOVA spectra,
Proc.\ Natl.\ Acad.\ Sci.\ U.S.A.\ 114 (2017) E5024--E5033.

\bibitem{IoninK:07}
Y.\ J.\ Ionin, H.\ Kharachani,
Balanced generalized weighing matrices and conference matrices,
In:\ Handbook of Combinatorial Designs, 2nd ed., 2007, 419--435.

\bibitem{KadisonS:59}
R.\ Kadison, I.\ Singer,
Extensions of pure states,
Am.\ J.\ Math.\ 81 (1959) 383--400.

\bibitem{MarcusSS:15}
A.\ W.\ Marcus, D.\ A.\ Spielman, N.\ Srivastava,
Interlacing families II:\ Mixed characteristic polynomials and the Kadison--Singer problem,
Ann.\ Math.\ (2015) 327--350.

%\bibitem{MasonH:02}
%J.\ C.\ Mason, D.\ C.\ Handscomb,
%Chebyshev polynomials,
%Chapman and Hall/CRC, 2002.

\bibitem{McKay:81}
B.\ D.\ McKay,
The expected eigenvalue distribution of a large regular graph,
Linear Algebra Appl.\ 40 (1981) 203--216.

\bibitem{Mixon:16}
D.\ G.\ Mixon,
Unit norm tight frames in finite-dimensional spaces,
in:\ Finite Frame Theory:\ A Complete Introduction to Overcompleteness,
K.\ A.\ Okoudjou (ed.),
Proceedings of Symposia in Applied Mathematics, 2016, pp.\ 68--93.

\bibitem{MixonQKF:13}
D.\ G.\ Mixon, C.\ J.\ Quinn, N.\ Kiyavash, M.\ Fickus,
Fingerprinting with equiangular tight frames,
IEEE Trans.\ Inf.\ Theory 59 (2013) 1855--1865.

\bibitem{NicaS:06}
A.\ Nica, R.\ Speicher,
Lectures on the combinatorics of free probability,
vol. 13, Cambridge University Press, 2006.

\bibitem{RenesBSC:04}
J.\ M.\ Renes, R.\ Blume-Kohout, A.\ J.\ Scott, C.\ M.\ Caves,
Symmetric informationally complete quantum measurements,
J.\ Math.\ Phys.\ 45 (2004) 2171--2180.

\bibitem{StrohmerH:03}
T.\ Strohmer, R.\ W.\ Heath,
Grassmannian frames with applications to coding and communication,
Appl.\ Comput.\ Harmon.\ Anal.\ 14 (2003) 257--275.

\bibitem{Tao:11}
T.\ Tao,
Topics in random matrix theory,
Graduate Studies in Mathematics 132 (2011).

\bibitem{Tropp:08}
J.\ A.\ Tropp,
On the conditioning of random subdictionaries,
Appl.\ Comput.\ Harmon.\ Anal.\ 25 (2008) 1--24.

\bibitem{Weaver:04}
N.\ Weaver,
The Kadison--Singer problem in discrepancy theory,
Discrete Math.\ 278 (2004) 227--239.

\end{thebibliography}
\end{document}